\newtheorem{theorem}{\bf Theorem}[section]
\newtheorem{proposition}{\bf Proposition}[section]
\newcommand{\bw}{{\bf w}}
\newcommand{\bq}{{\bf q}}
\newcommand{\PT}{{\cal PT}}
\newcommand{\T}{{\cal T}}
\newcommand{\p}{{\cal P}}
\newcommand{\tbw}{\tilde{\bf w}}
\newcommand{\tE}{\tilde{E}}
\newcommand{\hG}{\hat{G}}
\newcommand{\hJ}{\hat{J}}
\newcommand{\hE}{\hat{E}}
\begin{document}

\title{Nonlinear modes in a~generalized $\PT$-symmetric discrete nonlinear Schr\"odinger equation}

\author{ D. E. Pelinovsky\\ Department of Mathematics and Statistics, McMaster
University,\\ Hamilton, Ontario, Canada, L8S 4K1\\[5mm]
D. A. Zezyulin and V.~V.~Konotop\\ Centro de F\'isica Te\'orica e Computacional  and Departamento de F\'isica,\\
Faculdade de Ci\^encias, Universidade de Lisboa, \\Avenida Professor Gama Pinto 2, Lisboa 1649-003, Portugal
}

\maketitle

\begin{abstract}
We generalize a finite parity-time ($\PT$-)  symmetric network of the discrete nonlinear Schr\"{o}dinger type
and obtain general results on linear stability of the zero equilibrium,
on the nonlinear dynamics of the dimer model, as well as on the
existence and stability of large-amplitude stationary nonlinear modes.
A result of particular importance and novelty is the classification of
all possible stationary modes in the limit of large amplitudes.
We also discover a new integrable configuration of a $\PT$-symmetric dimer.
\end{abstract}

\section{Introduction}
\label{intro}

In the present work we consider a generalized finite
network of the discrete nonlinear Schr\"{o}dinger (dNLS) type
with gain and dissipation terms:
\begin{align}
\label{dnls-general}
i \frac{d q_n}{d t} = q_{n+1} + q_{n-1} + i \gamma_n q_n + \left[(1-\chi_n)|q_{n}|^2+\chi_n|q_{1-n}|^2\right] q_n.
\end{align}
Here $t$ is the evolution variable, $n$ is the integer site number between $-N+1$ and $N$,
and the real-valued coefficients satisfy
\begin{align}
\label{parameters}
\gamma_n=-\gamma_{1-n}, \quad \chi_n=\chi_{1-n}.
\end{align}
The network is closed with the boundary conditions
$q_{-N}=q_{N+1}=0$, which correspond to an isolated array of $2N$
nonlinear oscillators. The array is   $\p \T$ symmetric
with respect to the parity transformation $\p$ about the center point in $[-N+1,N]$
and the time reversion $\T$. Introducing the column-vector $\bq=(q_{-N+1},...,q_N)^T$
(hereafter $T$ stands for the matrix transpose), we rewrite
Eq.~(\ref{dnls-general})  in the vectorial form
\begin{align}
\label{dnls-general-vec}
i \frac{d \bq}{d t} = \left[ H + i G + F(\bq) \right] \bq,
\end{align}
where the real valued $2N\times 2N$ matrices are given by their entries as follows:
\begin{align}
\label{HG}
H_{nm}=\delta_{n,m+1}+\delta_{n,m-1},\quad  G_{nm}=\gamma_n\delta_{n,m},
\end{align}
and
\begin{align}
\label{F}
\left[F(\bq)\right]_{n,m}= \left[(1-\chi_n)|q_{n}|^2+\chi_n|q_{1-n}|^2\right]\delta_{n,m},
\end{align}
with $\delta_{n,m}$ being the Kronecker symbol and $n$ running from $-N+1$ to $N$.

The $\p$ operator is defined as a $2N\times2N$ antidiagonal matrix
with the only nonzero entries $\p_{kj}=\delta_{j,2N+1-k}$ (with $k$ and $j$ running from $1$ to $2N$). The
$\T$ operator is defined as complex conjugation. These definitions ensure
\begin{align}
\label{comut}
[\p,H]=\{\p,G\}=0, \quad [\PT,H+iG]=0,
\end{align}
where $[\cdot,\cdot]$ and $\{\cdot,\cdot\}$ stand for commutator and anti-commutator, respectively.
The conditions (\ref{comut}) formalize the definition of the $\PT$-symmetric linear lattice.
If all the eigenvalues of the operator $H+iG$ are real then the linear system is said to belong to the
unbroken $\PT$-symmetric phase (this situation obviously corresponds to  linear stability of
the zero equilibrium of the full nonlinear problem). Phases of  broken   $\PT$ symmetry arise
when the linear operator   has a pair (or several pairs) of complex conjugated eigenvalues.

Motivation for the  study  of nonlinear system (\ref{dnls-general}) stems from the recently
growing interest in nonlinear dynamics of the $\PT$-symmetric networks of the dNLS type which can be viewed as particular limits of
the network (\ref{dnls-general}). The simplest case of $N=1$ with $\chi_1 = 0$ corresponds
to the nonlinear $\PT$-symmetric dimer with the on-site (Kerr-type) nonlinearity. The latter model
was shown~\cite{Ramezani} to be a fully
integrable system, and its dynamics was thoroughly investigated in~\cite{Sukhorukov,Pelin4,Susanto,Barashenkov}.
Setting $N=2$  one  arrives   at a nonlinear $\PT$-symmetric quadrimer, which has been considered
for $\gamma_1 = \gamma_2$ in~\cite{LK, ZK},  as well as   for $\gamma_1 \neq \gamma_2$ in \cite{ZK, ZK2}.
A peculiarity of the latter case consists of the existence of multiple points (like triple points)
corresponding to the system parameters at which more than two (e.g. one unbroken and
two broken $\PT$-symmetric) phases of the zero equilibrium co-exist.

In a more general case of a network of arbitrary finite size  ($N < \infty$) the linear stability of the zero equilibrium was investigated in \cite{fragil,Lin09,Joglekar10,LKD,Pelin3,Bar} for the following particular configurations:
\begin{enumerate}
\item a chain with a $\PT$-symmetric \textit{defect}, i.e. having two ``defect'' nodes
$\gamma_{d}=-\gamma_{1-d} = \gamma$ at some $d=1, 2, \ldots, N$ (with all other sites having no dissipation or gain);

\item an \textit{alternating} chain with $\gamma_n = \gamma(-1)^n$ (where the sites with equal dissipation and gain alternate);

\item a \textit{clustered} chain with  $\gamma_n=\gamma {\rm sign}\left(n-\frac{1}{2}\right)$ (thus having two intervals of sites: one with gain and another one with dissipation).
\end{enumerate}
Stability of the zero equilibrium in unbounded $\p\T$-symmetric
dNLS chains (both alternating and clustered) was studied in \cite{Dmitr10,Pelin2}.

Stationary nonlinear modes  can be represented as $\bq(t)=\bw e^{-iEt}$, where $\bw=(w_{1-N},...,w_N)^T$
is the time-independent column-vector, satisfying the system of the nonlinear equations
\begin{align}
\label{dnls-stat-general}
Ew_n = w_{n+1} + w_{n-1} + i \gamma_n w_n + \left[(1-\chi_n)|w_{n}|^2+\chi_n|w_{1-n}|^2\right]  w_n.
\end{align}
The spectral parameter $E$ will be termed energy (in optical applications it corresponds to the propagation constant).
Focusing on the $\PT$-invariant stationary modes, i.e. the modes satisfying $\PT \bw = \bw$
(see \cite{Yang} for the discussion of relevance of this requirement), we have the reduction
$w_n=\bar{w}_{1-n}$ (hereafter an overbar stands for the complex conjugation), which reduces   Eq.~(\ref{dnls-stat-general})
 to the system of $N$ algebraic equations
\begin{align}
\label{stationary-general}
E w_n = w_{n+1} + w_{n-1} + i \gamma_n w_n + |w_{n}|^2w_n, \quad 1 \leq n \leq N,
\end{align}
subject to the boundary condition $w_0 = \bar{w}_1$ and $w_{N+1} = 0$. We note that the
parameters $\chi_n$ are eliminated in the stationary system  (\ref{stationary-general}).
The  $\PT$-invariant nonlinear modes obeying   system  (\ref{stationary-general}) have been found to bifurcate from the  eigenstates of the underlying linear operator $H+iG$~\cite{ZK, Pelin3}. Nonlinear modes bifurcating from exceptional points of the underlying linear operator
  were considered in~\cite{ZK2}.

If all $\gamma_n$ and $\chi_n$ are zero,  the nonlinear system (\ref{dnls-general}) is reduced to a chain of coupled conservative
oscillators with on-site nonlinearity. Triggered by the work~\cite{Eilbeck},
this model (known as the dNLS equation) has been extensively studied during
the last thirty years~\cite{Surv_Tsir1,Pel-book}. A powerful analytical tool in this study, introduced in~\cite{Eilbeck}, is the analytical continuation of the localized modes from the anticontinuum limit, i.e. from the limit $E\to\infty$, allowed to prove the existence of discrete breathers~\cite{McKay1994}. The approach allows for
classification of the nonlinear modes~\cite{Eilbeck,ELS85,Alfimov} and systematic study of their stability~\cite{CE85,pkf,pelsak}. Recently, it was shown in~\cite{KPZ}, that the approach based on the anticontinuum limit can be extended to   $\PT$-symmetric networks, and in particular, to the alternating chain with $\gamma_n=(-1)^n\gamma$. Further studies of this model were performed in~\cite{Pelin3} where the nonlinear modes were described both for an isolated $\p \T$-symmetric network and for an embedded $\p \T$-symmetric chain as a defect in an infinite dNLS equation.

In the present work, we elaborate the anticontinuum limit, which correspond to the limit
of large energies,  $E \to \infty$,  for the generalized $\PT$-symmetric dNLS network (\ref{dnls-general}) and offer a complete
classification of stationary nonlinear modes in the system of algebraic equations (\ref{stationary-general}). We prove that
large-amplitude stationary modes exist only if the large-amplitude
sites are all grouped together near the center point in $[-N+1,N]$ and no other stationary modes
exist. We classify the stationary modes according to the binary roots
of the phase equations and consider stability of the corresponding configurations.
The outcome of stability computations is similar to the stability theorem
in \cite{pkf} but fewer configurations are stable in the $\p \T$-symmetric case
because of the amplitude growth of the small-amplitude sites with gains.

Another goal of our study is to make a step  towards description of
the dynamical properties of the nonlinear system   (\ref{dnls-general}).
We consider the dynamics of the dimer model ($N=1$) and, in particular,
show that for $\chi_1= \frac{1}{2}$ all  the time-dependent solutions
are bounded for sufficiently small values of $\gamma_1$ that ensure that
the $\PT$ symmetry of the underlying linear problem is unbroken.
We also show that the model with $\chi_1=  \frac{1}{2}$ admits two integrals
of motions and hence is an integrable model, similarly to its counterpart
with $\chi_1=0$ considered in \cite{Ramezani}. On the other hand,  we show that
for any $\chi_1 \neq \frac{1}{2}$ the dimer model always has unbounded solutions
for any arbitrarily small but nonzero $\gamma_1$.

The paper is organized as follows. Section~\ref{sec:lin} gives sufficient conditions
for existence of unbroken and broken $\PT$-symmetric phases obtained from the linear
stability analysis of the zero equilibrium. Section~\ref{sec:dim} characterizes the
nonlinear dynamics of the dimer model for $N = 1$. Conserved quantities of the generalized
dimer model with $\chi_1 = \frac{1}{2}$ are also discussed. Section~\ref{sec:class}
describes the existence and classification of the
stationary  nonlinear modes in the   case of arbitrary finite $N$. Section~\ref{sec:stab}
outlines the stability
computations for the most important configurations of the stationary  nonlinear modes.
Section~\ref{sec:concl} concludes
the paper with a summary of results.

\section{Linear stability of the zero equilibrium}
\label{sec:lin}
While the present work is devoted to the  nonlinear problem (\ref{dnls-general}), 
the underlying  linear model itself makes a particular physical meaning
if it describes stable propagation of linear waves.
Such waves are obtained from the linear eigenvalue problem
\begin{align}
\label{lin}
\tE\tbw = J\tbw, \quad J:=H + i G
\end{align}
with $H$ and $G$ being the matrices defined by (\ref{HG}). Hereafter we use tildes in order to distinguish eigenvectors $\tbw$ and eigenvalues $\tE$ of the underlying linear problem.
In this section  we formulate the sufficient conditions for existence of the
propagating linear modes and recover some of the known relevant results
of the linear theory.

For $G=0$, matrix $J$ becomes Hermitian and its spectrum is real.
One can also expect that  if all $\gamma_n$ are sufficiently
small, then the spectrum of $J$ remains real. In this situation,
$J$ is said to have \textit{unbroken} $\PT$
symmetry. On the other hand, if $\gamma_n$ are large enough, then
the spectrum enters the complex domain, i.e. $\PT$ symmetry  becomes
\textit{broken}.  In  following Theorems~\ref{exact} and
\ref{breaking} we substantiate the above informal discussion by
finding (possibly not optimal) estimates for domains  of unbroken and
broken $\PT$ symmetries for arbitrary finite $N$.
Before passing to Theorems~\ref{exact} and \ref{breaking}, let us
prove the following auxiliary result.

\begin{proposition}
\label{prop:char}%
The characteristic polynomial of the matrix $J$
$$
P(\lambda) = \lambda^{2N} - p_1\lambda^{2N-1}  -  p_2 \lambda^{2N-2} - \ldots - p_{2N}
$$
has real coefficients $p_n$. All the
coefficients associated with odd powers of $\lambda$ are zero:
$$
p_1 = p_3 = \ldots = p_{2N-1}=0.
$$
Additionally one has
\begin{align}
\label{eq:p2}%
 p_{2} = 2N -1 - \sum_{n=1}^N \gamma_n^2.
\end{align}
\end{proposition}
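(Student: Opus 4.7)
All three assertions reduce to two elementary similarity relations for $J=H+iG$. My plan is: (i) use the identity $\p J \p^{-1} = \bar J$, which follows from $[\p,H]=0$, $\{\p,G\}=0$ and the reality of $H$ and $G$; (ii) introduce the sign-alternating diagonal matrix $S=\mathrm{diag}((-1)^n)$, $n=-N+1,\dots,N$, and observe that $SHS=-H$ (conjugation by $S$ flips every off-diagonal entry of the tridiagonal $H$) while $SGS=G$ (since $G$ is diagonal), giving $SJS=-\bar J$; and (iii) combine (i) and (ii) with Newton's identities to pin down $p_2$.

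For the reality of the $p_n$, step (i) shows that $J$ and $\bar J$ are similar and hence have identical characteristic polynomials, so $P(\lambda)=\overline{P(\bar\lambda)}$ and every $p_n\in\R$. For the vanishing of the odd-indexed coefficients, compose (i) and (ii): the matrix $\p S$ conjugates $J$ into $-J$, so $J$ and $-J$ share the same characteristic polynomial. Writing this out,
$$
P(\lambda)=\det(\lambda I+J)=(-1)^{2N}\det(-\lambda I-J)=P(-\lambda),
$$
hence $P$ is even in $\lambda$ and $p_1=p_3=\cdots=p_{2N-1}=0$.

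For the formula (\ref{eq:p2}), the coefficient of $\lambda^{2N-2}$ in $P(\lambda)$ equals $-p_2$ on one side and, by the elementary symmetric expansion, equals $\tfrac{1}{2}\bigl((\mathrm{tr}\,J)^2-\mathrm{tr}(J^2)\bigr)$ on the other. The condition $\gamma_n=-\gamma_{1-n}$ gives $\mathrm{tr}\,J=i\sum_n\gamma_n=0$. Expanding
$$
\mathrm{tr}(J^2)=\mathrm{tr}(H^2)-\mathrm{tr}(G^2)+2i\,\mathrm{tr}(HG),
$$
the cross term vanishes because $H$ has zero diagonal; direct counting of adjacent pairs in the length-$2N$ chain yields $\mathrm{tr}(H^2)=2(2N-1)$; and the antisymmetry $\gamma_n=-\gamma_{1-n}$ gives $\mathrm{tr}(G^2)=\sum_n\gamma_n^2=2\sum_{n=1}^N\gamma_n^2$. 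Substituting produces $-p_2=-(2N-1)+\sum_{n=1}^N\gamma_n^2$, which is (\ref{eq:p2}).

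I do not anticipate a real obstacle: the only conceptual step is noticing the alternating-sign similarity $SJS=-\bar J$, which together with $\p J\p=\bar J$ forces the spectrum of $J$ to be symmetric under both complex conjugation and negation. The rest is tracing powers of sparse matrices.
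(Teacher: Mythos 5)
Your proof is correct and follows essentially the same route as the paper: the paper's single similarity matrix $Z$ (with $Z_{nj}=(-1)^{j-1}\delta_{j,2N+1-n}$) is exactly your composite $\p S$, and the formula $p_2=\tfrac{1}{2}\mathrm{tr}\,J^2$ for a traceless matrix together with the same trace computation gives (\ref{eq:p2}). The only cosmetic difference is that you derive the reality of the $p_n$ from the explicit similarity $\p J\p^{-1}=\bar J$ rather than citing the conjugate-pair property of $\PT$-symmetric spectra, which makes that step self-contained.
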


\begin{proof}
Reality of coefficients of the characteristic polynomial $P(\lambda)$  for a
general $\PT$-symmetric matrix   follows from the fact that any
eigenvalue of the $\PT$-symmetric matrix is either real or belongs to a complex-conjugate pair
\cite{BenderNessSuff}. In order to show that the characteristic
polynomial $P(\lambda)$ does not contain odd powers of $\lambda$, we notice
that matrices $J$ and $-J$ are related by the similarity transformation
$$
-J=Z^{-1}JZ,
$$
where $Z$ is a matrix with only nonzero entries
$Z_{nj}= (-1)^{j-1} \delta_{j,2N+1-n}$ for $j,n=1, 2, \ldots, 2N$.
Notice that $\det Z = 1$  and $Z^{-1} = -Z$.  Therefore, matrices
$J$ and $-J$ share the same characteristic polynomial
$P(\lambda)$. The latter implies that $P(\lambda)$ is an even
function.

In order to prove (\ref{eq:p2}), we first notice that matrix $J$
is traceless: $\textrm{tr}\, J=0$. In this case,
the coefficient $p_2$ is given by $p_2=\frac{1}{2}\textrm{tr}\, J^2$
\cite{Gantmakher}. Using a simple straightforward computation,  one
finds that $\textrm{tr}\, J^2 = 2\left(2N -1-\sum_{n=1}^N
\gamma_n^2 \right)$, which yields (\ref{eq:p2}).
\end{proof}

\begin{theorem}[on unbroken $\PT$ symmetry]
\label{exact} %
Define $\Gamma := \max\limits_{1\leq n\leq N}|\gamma_n|$. If
\begin{align}
\label{eq:ex}%
\Gamma \leq \frac{1}{2N} \sin^2\left(\frac{\pi}{2(2N+1)}\right),
\end{align}
then all eigenvalues of matrix  $J$ are real.
\end{theorem}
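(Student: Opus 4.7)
The plan is to diagonalize the unperturbed Hermitian matrix $H$ explicitly, pass to its eigenbasis, use the $\PT$-symmetry to force the Gershgorin disks of the transformed matrix to sit centered on the real axis, choose the bound on $\Gamma$ so that these disks are pairwise disjoint, and finally combine the conjugation symmetry of the spectrum (Proposition~\ref{prop:char}) with continuity in a real parameter to conclude that each eigenvalue coincides with its conjugate.

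First, set $x:=\pi/(2(2N+1))$. The matrix $H$ is the standard $2N\times 2N$ symmetric tridiagonal matrix with zero diagonal and unit off-diagonals, so its $2N$ simple eigenvalues are $\lambda_k=2\cos(k\pi/(2N+1))$, $k=1,\ldots,2N$, with real orthonormal sine-eigenvectors $v_k$. Using $\cos A-\cos B=-2\sin\frac{A+B}{2}\sin\frac{A-B}{2}$, the minimum consecutive spacing equals $4\sin(x)\sin(3x)$, attained at the edges of the spectrum. Since $\sin 3x=(3-4\sin^2 x)\sin x\geq 2\sin x$ for $x\leq\pi/6$ (covering all $N\geq 1$), the minimum gap of $H$ is at least $8\sin^2 x$.

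Let $U$ be the orthogonal matrix with columns $v_k$, set $\tilde G:=U^T G U$, and write $U^T J U=\Lambda+i\tilde G$ with $\Lambda=\mathrm{diag}(\lambda_k)$. The key observation is that $\tilde G_{kk}=0$ for every $k$: because $\lambda_k$ is simple and $[\p,H]=0$, the eigenspace $\mathrm{span}\{v_k\}$ is $\p$-invariant, so $\p v_k=\epsilon_k v_k$ with $\epsilon_k\in\{+1,-1\}$; the anticommutation $\{\p,G\}=0$ then yields $\p(Gv_k)=-\epsilon_k Gv_k$, placing $Gv_k$ in the parity eigenspace opposite to $v_k$, which is orthogonal to it. Cauchy--Schwarz bounds the off-diagonals by $|\tilde G_{kl}|\leq\Gamma$, so the $k$-th Gershgorin disk of $U^T J U$ is centered at the real number $\lambda_k$ with radius $r_k\leq(2N-1)\Gamma$. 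Under the theorem's hypothesis, the sum of any two radii satisfies $2r_k\leq 2(2N-1)\Gamma<2\sin^2 x$, strictly less than the lower bound $8\sin^2 x$ for the minimum gap, so the $2N$ disks are pairwise disjoint.

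To conclude, I would introduce the homotopy $J(t)=H+itG$, $t\in[0,1]$. In the basis $U$ its Gershgorin disks retain the centers $\lambda_k$ and have radii $t\,r_k$, so they are pairwise disjoint at every $t$, and at $t=0$ each contains exactly one eigenvalue of $H$; continuity of the eigenvalues of $J(t)$ combined with Gershgorin's inclusion (no eigenvalue can lie outside the union of disks) preserves the count of one eigenvalue per disk for all $t\in[0,1]$. Finally, Proposition~\ref{prop:char} (which encodes the conjugation symmetry forced by $[\PT,J]=0$) makes the spectrum of $J$ closed under complex conjugation, while each Gershgorin disk is itself conjugation-symmetric since its center is real. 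Thus any $\lambda\in\mathrm{spec}(J)$ in disk $k$ has $\bar\lambda$ in disk $k$ as well, and uniqueness per disk forces $\lambda=\bar\lambda\in\R$. The main technical obstacle is the vanishing-diagonal step: without $\tilde G_{kk}=0$ the Gershgorin disks would be displaced off the real axis by terms of order $|\gamma_n|$ and the forced-conjugation argument would collapse; this cancellation rests squarely on combining the simplicity of $\mathrm{spec}(H)$ with $\{\p,G\}=0$.
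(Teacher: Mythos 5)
Your proof is correct and follows the same overall strategy as the paper's: diagonalize $H$ by the discrete sine transform, apply Gershgorin's theorem to the conjugated matrix $\Lambda + i\tilde G$, use the spectral gap of $H$ together with the hypothesis on $\Gamma$ to separate the disks, and invoke the conjugation symmetry of the spectrum to force reality. The one genuine difference is how the final step is closed. The paper never computes the diagonal of $\tilde G$; it only observes that the real part of the $k$-th disk center is $\lambda_k$, so the disks lie in disjoint vertical strips, and a complex-conjugate pair (which shares its real part) would have to occupy a single disk, contradicting the one-eigenvalue-per-disk count. You instead prove the sharper fact $\tilde G_{kk}=0$ via the parity argument (simplicity of the spectrum of $H$ and $[\p,H]=0$ give $\p v_k=\pm v_k$, and $\{\p,G\}=0$ pushes $Gv_k$ into the opposite parity sector, orthogonal to $v_k$), which places each disk symmetrically about the real axis and lets you conclude $\lambda=\bar\lambda$ disk by disk. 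Both closings are valid; yours buys a cleaner geometric picture and incidentally better constants ($|\tilde G_{kl}|\le\Gamma$ versus the paper's bound $2\Gamma$, and the gap bound $8\sin^2 x$ versus $4\sin^2 x$ via $\sin 3x\ge 2\sin x$), so your argument would in fact tolerate a weaker hypothesis on $\Gamma$; the paper's strip argument avoids having to compute the diagonal at all. Your explicit homotopy $J(t)=H+itG$ is simply the standard proof of the refinement of Gershgorin's theorem (an isolated disk contains exactly one eigenvalue) that the paper cites directly, so it adds rigor without changing the substance.
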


\begin{proof}
We recall that the spectrum of the matrix $H$ is well known. It
consists of   $2N$ distinct  eigenvalues  which can be listed in
the descending order as follows:
$$
\tE_n = 2\cos\left(\frac{n\pi}{2N+1}\right), \quad 1\leq n\leq 2N.
$$
This allows us to estimate the smallest distance between the adjacent
eigenvalues:
\begin{eqnarray}
\min_{1\leq n\leq 2N-1}|\tE_{n+1}-\tE_{n}| & = &
4\min_{1\leq n\leq 2N-1}\left| \sin\left(\frac{2n+1}{2N+1}\frac{\pi}{2}\right)\right|
\sin\left(\frac{\pi}{2(2N+1)}\right) \nonumber \\
& = &  4 \sin\left(\frac{3 \pi}{2(2N+1)}\right) \sin\left(\frac{\pi}{2(2N+1)}\right) \nonumber \\
& \geq & 4\sin^2\left(\frac{\pi}{2(2N+1)}\right) =: 2 r_N.
\label{eq:mindist}%
\end{eqnarray}
Next we introduce the diagonal matrix  of the eigenvalues:
$\hE=$diag$(\tE_1,...,\tE_{2N})$ and the matrix $S$ whose columns
are the eigenvectors of $H$:
$$
S_{nm}=\sin\left(\frac{nm\pi}{2N+1}\right), \quad 1\leq n, m\leq 2N
$$
(i.e. $S$ is the matrix of discrete sine transform). The inverse
matrix is given as
\begin{equation}
S^{-1} = \frac{2}{2N+1}S.
\end{equation}
Thus there exists the similarity transformation: $S^{-1}HS=\hE$.

Let us now apply transformation $S$ to the matrix $J$:
\begin{equation}
\hJ := S^{-1} J S = \hE + i\hG,
\end{equation}
where $\hG := S^{-1}G S$.   Obviously, all the entries of
matrix $\hG$ are real. It is also easy to estimate
\begin{equation}
\label{eq:Gnm} %
|\hG_{nm}| \leq \frac{2}{2N+1} \max_{1 \leq n \leq N} |\gamma_n|
\sum_{k=1}^{2N} \left| \sin\left(\frac{\pi k n}{2N+1}\right) \sin\left(\frac{\pi m k}{2N+1}\right) \right|
< 2\Gamma.
\end{equation}

Let us now estimate location of the eigenvalues of $J$ by
applying Gershgorin's circle theorem  \cite{Lancaster} to the similar matrix $\hJ$.
Radii $R_n$ and centers $C_n$ of Gershgorin's disks for the matrix $\hJ$ are given as
\begin{eqnarray*}
R_n=  \sum_{m=1,\, m \ne n}^{2N} |(\hG)_{nm}|,  \quad
\label{eq:center}%
 C_n = \tE_n + i (\hG)_{nn}, \quad 1\leq n \leq 2N.
\end{eqnarray*}
Real parts of the disk centers  $C_n$ equal to $\tE_n$. Therefore,
in the complex $z$-plane the  $n$th disk belongs to a  strip  $\tE_n - R_n \leq \textrm{Re\ }z \leq\tE_n+ R_n$.
From Eqs.~(\ref{eq:ex}) and (\ref{eq:Gnm})   we observe that  $R_n<4N\Gamma<r_N$, where $r_N$
is the lower boundary of the  half-distance between the eigenvalues defined in (\ref{eq:mindist}).
Therefore, all the $2N$ strips are disjoint. According to Gershgorin's circle theorem \cite{Lancaster},
in this situation  each disk contains exactly one eigenvalue. Since the real parts
of all the eigenvalues  are different, the spectrum does not contain
complex-conjugated eigenvalues and hence   is purely real because complex eigenvalues
in the spectrum of the $\PT$-symmetric matrix  $J$  (if any) always come in complex-conjugate pairs \cite{BenderNessSuff}.
\end{proof}

\begin{theorem}[on broken $\PT$ symmetry]
 \label{breaking}
 Consider the eigenvalue problem (\ref{lin}).
 \begin{enumerate}
    \item If $\sum_{n=1}^N \gamma_n^2 > 2N-1$, then there exist at least two eigenvalues with nonzero imaginary part.
    \smallskip
    \item If $|\gamma_{N}|>1$ and $|\gamma_n|>2$ for each $n\in \{ 1, 2, \ldots, N-1\}$, then
    there exists no pure real  eigenvalues $\tE$ of the matrix $J$.
\end{enumerate}
\end{theorem}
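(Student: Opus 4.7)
The two parts require short, independent arguments.

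For part 1, I would invoke Proposition \ref{prop:char}. The identity $\textrm{tr}\, J^2 = 2p_2$ obtained in that proof, together with $\textrm{tr}\, J^2 = \sum_{j=1}^{2N} \tE_j^{2}$ (which holds with algebraic multiplicity for any square matrix), yields
\begin{equation*}
\sum_{j=1}^{2N} \tE_j^{2} \;=\; 2\Bigl(2N-1-\sum_{n=1}^N \gamma_n^2\Bigr).
\end{equation*}
Under the hypothesis $\sum_{n=1}^N\gamma_n^2 > 2N-1$ this sum is strictly negative, whereas real eigenvalues contribute nonnegatively. Hence at least one $\tE_j$ must be non-real, and because the non-real eigenvalues of a $\PT$-symmetric matrix occur in complex-conjugate pairs \cite{BenderNessSuff}, we obtain at least two eigenvalues with nonzero imaginary part.

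For part 2, the cleanest route is Gershgorin's circle theorem applied \emph{directly} to $J$ (not to the rotated matrix $\hJ$ used in Theorem \ref{exact}). The diagonal entries of $J$ are purely imaginary, $J_{nn}=i\gamma_n$, so each Gershgorin disk $D_n$ is centered on the imaginary axis. Since the off-diagonal part of $J$ coincides with the tridiagonal matrix $H$ with Dirichlet boundary, the Gershgorin radius $R_n$ equals $1$ at the two boundary rows $n=-N+1$ and $n=N$, and equals $2$ at every interior row. A disk centered at $i\gamma_n$ with radius $R_n$ meets $\R$ if and only if $|\gamma_n|\leq R_n$. The hypotheses $|\gamma_N|>1$ and $|\gamma_n|>2$ for $1\leq n\leq N-1$ give the strict reverse inequality for $1\leq n\leq N$, and the $\PT$-symmetry relation $\gamma_{1-n}=-\gamma_n$ propagates it to $-N+1\leq n\leq 0$. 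Consequently none of the $2N$ Gershgorin disks meets $\R$, and so $J$ has no real eigenvalue.

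The main obstacle, if any, is bookkeeping in part 2: one must keep track of the fact that the two endpoint rows of $H$ lose one off-diagonal entry, which is precisely why the weaker condition $|\gamma_N|>1$ (rather than $|\gamma_N|>2$) already suffices at the boundary sites. No deeper technical issue appears to arise in either half of the statement.
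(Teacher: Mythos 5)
Your proposal is correct and follows essentially the same route as the paper: part 1 rests on the sign of $p_2$ from Proposition~\ref{prop:char} (your trace identity $\sum_j \tE_j^2 = \textrm{tr}\, J^2 = 2p_2$ is just a rephrasing of the paper's observation that $p_2$ is the sum of the roots of the reduced polynomial in $\xi=\lambda^2$), and part 2 is the identical Gershgorin argument applied directly to $J$ with radii $1$ at the endpoint rows and $2$ in the interior.
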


\begin{proof}
The first claim follows from Proposition~\ref{prop:char}. Indeed,  we
can introduce $\xi := \lambda^2$ and rewrite the
characteristic equation as
$$
P(\lambda) = \xi^N - p_2 \xi^{N-1} - \ldots -p_{2N}=0.
$$
Then $p_2$ is sum of all the roots of the latter
equation. The condition of  the theorem implies that  $p_2<0$. Then
there must exist at least one root $\xi_0$ with negative real part. This
would correspond to two eigenvalues $\lambda = \pm \sqrt{\xi_0}$ with
nonzero imaginary part.

In order to prove the second claim,  we compute the radii
$R_n=\sum_{m\neq n}|J_{nm}|$ of Gershgorin's discs
corresponding to the diagonal elements $J_{nn}= i\gamma_n$, $n=-N+1, -N+2, \ldots, N$.
For the first ($n=-N+1$) and the last ($n=N$) radii,
we obtain $R_{-N+1}=R_{N} = 1$, while for the other radii
($n=-N+2, \ldots, N-1$) we have $R_{n}=2$. Then it follows from
Gershgorin's circle theorem  \cite{Lancaster} that if
the condition (ii) of the theorem is satisfied, then there is no
intersection of Gershgorin's discs centered at  $  i\gamma_n$
with the real axis and hence  no eigenvalues lies on the real
axis.
\end{proof}

Let us now consider a parameter space $\{\gamma_1,...,\gamma_{ M}\}$,
where $ M$ ($1\leq M\leq N$) is a number of independent parameters $\gamma_n$.
As a consequence of Theorem~\ref{exact}, there exists a non-empty domain $D_0$ where
the spectrum is entirely real. Obviously, $D_0$ contains
the origin of the parameter space. The boundary of the domain $D_0$
consists of the points at which $\PT$ symmetry breaking occurs.
On the other hand, it follows from Theorem \ref{breaking}
that there exists a non-empty unbounded domain $D_{\infty}$, where the spectrum contains   complex eigenvalues.

We also notice the dependence of the bound (\ref{eq:ex}) in Theorem \ref{exact} on $N$,
i.e. the importance of the length of the chain for the  sufficient condition of unbroken $\PT$
symmetry.

For the three examples of defect, alternating, and clustered chains
listed in the Introduction, there exists a single gain and loss
parameter $\gamma$ (i.e. now $M=1$), which can be assumed to be non-negative without loss of generality.
In this case, the boundary of $D_0$ degenerates in a single point  $\gamma_{\PT}$,
which is   termed as a $\PT$ symmetry breaking threshold.
The results of the previous studies can be summarized as follows.

\begin{enumerate}
  \item The defect chain  was considered in \cite{Lin09,Joglekar10},
   where it was reported   that  $\gamma_{\PT} = 1$ if either $d=1$ or $d=N$;  and
  $\gamma_{\PT} \propto N^{-1}$ if $d=2, 3, \ldots, N-1$ (recall that
   $d$ is the defect position in the chain).

  \item The alternating chain was considered in \cite{Pelin3,Bar}, where
  it was found that
  $$
  \gamma_{\PT} = 2 \cos \left( \frac{\pi N }{1 + 2N} \right)  = 2 \sin \left( \frac{\pi  }{2(1 + 2N)} \right) \propto N^{-1}.
  $$

  \item The clustered chain was also considered in \cite{Bar} using both analytical and graphical arguments.
  It was found that $\gamma_{\PT} \propto  N^{-2}$. Note that for sufficiently large $N$ the clustered $\PT$-chain
  has a narrower stability interval compared to the other two examples.
\end{enumerate}
Note that the condition  (\ref{eq:ex}) in Theorem \ref{exact} is not sharp for any of the examples above
because the upper bound in (\ref{eq:ex}) behaves like $N^{-3}$ as $N \to \infty$.

\section{The dimer model}
\label{sec:dim}

Here we consider the simplest case, $N=1$, of the $\p \T$-symmetric network (\ref{dnls-general}).
This   case is usually referred as to the dimer model. Setting
$\gamma_1=-\gamma_0=\gamma$ and $\chi_0=\chi_1=\chi$, we rewrite the dimer model explicitly:
\begin{eqnarray}
\left\{ \begin{array}{l} i \dot{q}_0 = q_1 - i\gamma q_0 + \left[(1-\chi) |q_0|^2 + \chi |q_1|^2\right] q_0, \\[2mm]
i \dot{q}_1 = q_0 + i \gamma q_1 + \left[\chi |q_{  {0}}|^2 + (1 - \chi) |q_{ {1}}|^2\right] q_1,
\end{array} \right.
\label{dimer-chi}
\end{eqnarray}
where the overdot denotes the derivative with respect to $t$.
Without loss of generality, we assume that $\gamma \geq 0$.
Our aim is to understand the long-term  dynamics of the nonlinear dimer model (\ref{dimer-chi}).
In particular, we are interested in checking whether the dimer model (\ref{dimer-chi}) is integrable and if there
exist solutions that grow to infinity.

We note that the case $\chi = 0$ is well studied (see e.g.
\cite{Ramezani,Sukhorukov,Pelin4,Susanto,Barashenkov} and references therein).
It is known that a convenient way
of treating the system is to pass to the Stokes variables (used for the dNLS equation for the first time in~\cite{CCE88}):
\begin{eqnarray}
\label{spin}
S_0=|q_0|^2+|q_1|^2,
\quad
S_1=q_0 \bar{q}_1+\bar{q}_0q_1,
\quad
S_2=i(q_0 \bar{q}_1- \bar{q}_0q_1),
\quad
S_3=|q_1|^2-|q_0|^2,
\end{eqnarray}
which satisfy the relation $S_0^2=S_1^2+S_2^2+S_3^2$. By means of straightforward algebra,
from system (\ref{dimer-chi}) we obtain the following set of equations
\begin{eqnarray}
\label{S0}
&& \dot{S}_0= 2\gamma  S_3,
\\
\label{S1}
&& \dot{S}_1= (1-2\chi)  S_2 S_3,
\\
\label{S2}
&& \dot{S}_2= 2  S_3 {-}(1-2\chi) S_1S_3,
\\
\label{S3}
&& \dot{S}_3= 2\gamma S_0  {-} 2S_2.
\end{eqnarray}
Global existence of solutions of the dimer model (\ref{dimer-chi})
follows from Eq. (\ref{S0}) because Gronwall's inequality implies that
\begin{equation}
\label{Gronwall} S_0(t) \leq S_0(0) e^{2 \gamma |t|}, \quad t \in
\mathbb{R}.
\end{equation}
Therefore the question we address is if $S_0 = |q_0|^2 +|q_1|^2$ remains
bounded as $t \to \infty$. To this end, we shall separate the cases
$\chi \neq \frac{1}{2}$ and $\chi = \frac{1}{2}$.

\subsection{The case $\chi \neq \frac{1}{2}$}

We shall prove that for any arbitrarily small but  nonzero $\gamma$ the dimer equations (\ref{dimer-chi}) do have
solutions which become unbounded as $t \to \infty$.

\begin{theorem}
Let $\chi \neq \frac{1}{2}$ and $\gamma > 0$. Then system
(\ref{dimer-chi}) has an unbounded solution as $t\to\infty$. \label{theorem-growth}
\end{theorem}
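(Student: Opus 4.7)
The plan is to exhibit a conservation law that confines $S_1$ and $S_2$ to bounded intervals along every trajectory, and then to show that for a sufficiently large initial amplitude the component $|q_1|^2$ is driven to infinity by a linear instability with bounded forcing.

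\textbf{Step 1 (the invariant).} Setting $K:=(1-2\chi)S_1-2$, equations (\ref{S1})--(\ref{S2}) take the form $\dot S_2=-KS_3$ and $\dot K=(1-2\chi)^2 S_2 S_3$, from which a one-line calculation shows that
\[
I := S_2^2+\left(S_1-\frac{2}{1-2\chi}\right)^2
\]
is conserved; this is where the hypothesis $\chi\neq 1/2$ enters. Denoting its value by $A^2$, we obtain the a priori bounds $|S_2(t)|\leq A$ and $|S_1(t)-2/(1-2\chi)|\leq A$ for all $t\geq 0$.

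\textbf{Step 2 (forced linear growth).} Since $(S_0+S_3)/2=|q_1|^2$, adding (\ref{S0}) and (\ref{S3}) yields the scalar linear ODE
\[
\frac{d}{dt}|q_1|^2 = 2\gamma|q_1|^2 - S_2,
\]
whose variation-of-parameters solution reads
\[
|q_1(t)|^2 = e^{2\gamma t}\left[|q_1(0)|^2 - \int_0^t e^{-2\gamma s} S_2(s)\,ds\right].
\]
Step 1 gives $\bigl|\int_0^t e^{-2\gamma s} S_2(s)\,ds\bigr|\leq A/(2\gamma)$. Choose $q_0(0)=0$ and $q_1(0)=q\in\mathbb{C}$; then $S_1(0)=S_2(0)=0$, so $A=2/|1-2\chi|$, and the condition $|q|^2>1/(\gamma|1-2\chi|)$ forces the bracket to be bounded below by a positive constant. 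Hence $|q_1(t)|^2\geq c\,e^{2\gamma t}\to\infty$, which proves the theorem.

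\textbf{Main obstacle.} Everything reduces to Step 1: discovering the invariant $I$. The verification is one line once the correct ansatz is tried, so the difficulty is purely conceptual. The combination is suggested by reading (\ref{S1})--(\ref{S2}) as a rotation of $(S_1,S_2)$ about the center $(2/(1-2\chi),0)$ with instantaneous angular speed proportional to $S_3$, so the squared distance to that center is preserved. The degeneration of this center as $\chi\to 1/2$ is consistent with the integrability of the $\chi=1/2$ case established elsewhere in the paper, where a different pair of invariants produces bounded orbits in the unbroken $\PT$ phase. Once the invariant is in hand, the remainder is a routine variation-of-parameters estimate, and the growth rate $e^{2\gamma t}$ produced by the construction saturates the Gronwall bound (\ref{Gronwall}).
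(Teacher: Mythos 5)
Your proof is correct and follows essentially the same route as the paper's: your conserved quantity $I$ is exactly the squared radius of the rotation of $(S_1,S_2)$ about $\left(\tfrac{2}{1-2\chi},0\right)$ that the paper writes out explicitly via the reparametrized time $s(t)=\int_0^t S_3\,dt'$, and both arguments then exploit the resulting a priori bound on the forcing term in the linear equation $\tfrac{d}{dt}|q_1|^2=2\gamma|q_1|^2-S_2$. The only cosmetic differences are that you use variation of parameters with the clean choice $q_0(0)=0$, where the paper bounds $|q_0q_1|$ and invokes a comparison theorem with a lower solution for $q_0(0)$ small and $q_1(0)$ large.
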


\begin{proof}
Equations~(\ref{S1}) and (\ref{S2}) can be reduced to the harmonic oscillator equation
in the new temporal variable
\begin{equation}
\label{new-variable-s}
s(t) := \int_0^t S_3(t')dt'.
\end{equation}
Note that the variable $s(t)$ is well defined for all $t \in \mathbb{R}$, since the solution
of the system  (\ref{dimer-chi}) exists globally for all $t \in \mathbb{R}$.
Using the auxiliary variable $s(t)$, we obtain an equivalent representation of
solutions of Eqs.~(\ref{S1}) and (\ref{S2}):
\begin{equation}
\label{exact-ab-variable}
\left\{ \begin{array}{l}
\displaystyle{S_1(t) = \frac{2}{1 - 2\chi} + C_1 \cos\left[ (1-2 \chi) s(t) \right]
+ C_2 \sin \left[(1-2 \chi)s(t)\right],}
\\
\displaystyle{S_2(t) = -C_1 \sin\left[(1-2 \chi) s(t)\right] + C_2
\cos \left[(1-2 \chi) s(t) \right],}
\end{array} \right.
\end{equation}
where $C_1$ and $C_2$ are   constants, which are uniquely defined
by the initial conditions:
\begin{equation}
\label{exact-ab-variable-initial}
C_1 = S_1(0) - \frac{2}{1-2\chi}, \quad C_2 = S_2(0).
\end{equation}
Using Eqs. (\ref{exact-ab-variable}) and (\ref{exact-ab-variable-initial}), one can easily find
constants $D$ and $E$ independent on $q_0(0)$ and $q_1(0)$ such that
\begin{equation}
\label{bound-ab} |q_0(t) q_1(t)| \leq \frac{1}{|1 - 2\chi|} +
|C_1| + |C_2| <  D |q_0(0)q_1(0)|  + E,
\end{equation}
hence $|q_0(t) q_1(t)|$ is a bounded function of $t$ for all times.

To complete the proof, we multiply the second equation of system
(\ref{dimer-chi}) by $\bar{q}_1$, and add its complex conjugate equation
to obtain
\begin{equation}
\label{second-eq-amplitude}
 \frac{d |q_1|^2}{d t} = 2\gamma |q_1|^2 - i( \bar{q}_1 q_0 - q_1
 \bar{q}_0) > 2\gamma |q_1|^2 - 2( D |q_0(0)q_1(0)|  + E),
\end{equation}
where the latter inequality follows from Eq.~(\ref{bound-ab}).
Let us now introduce the lower solution $|{q}_L(t)|^2$ that satisfies
the differential equation
\begin{equation}
\label{eq:lower}
 \frac{d |{q_L}|^2}{d t} = 2 \gamma |{q_L}|^2 - 2 \left(D|q_0(0)q_1(0)| +  E \right).
\end{equation}
Then using (\ref{second-eq-amplitude}) we have
\begin{equation}
\label{eq:chain}
0= \frac{d |{q_L}|^2}{d t} - 2 \gamma |{q_L}|^2 + 2 \left(D|q_0(0)q_1(0)| +  E \right) <
\frac{d |{q_1}|^2}{d t} - 2 \gamma |{q_1}|^2 + 2 \left(D|q_0(0)q_1(0)| +  E \right).
\end{equation}
Let us choose
the initial data $q_1(0)$  to be sufficiently large
and  $q_0(0)$ to be sufficiently small such that
\begin{equation}
\label{eq:chooseql}
\frac{D|q_0(0)q_1(0)| +  E}{\gamma} < |q_L(0)|^2 \leq |q_1(0)|^2.
\end{equation}
The first inequality in Eq.~(\ref{eq:chooseql}) (together with  Eq.~(\ref{eq:lower}))
implies that the lower solution grows like $e^{2\gamma t}$ for positive $t$.
Using the second inequality in Eq.~(\ref{eq:chooseql}) as well as
inequality in Eq.~(\ref{eq:chain}), we apply the comparison theorem
for differential equations~\cite{Walter} and prove that
$|q_L(t)|^2 \leq |q_1(t)|^2$ for all $t>0$. Therefore, $|q_1(t)|^2$
grows at least exponentially as $t \to \infty$.
\end{proof}

\subsection{The case $\chi = \frac{1}{2}$}
\label{chi12}

It follows from (\ref{S1}) that now $\dot{S}_1=0$, i.e. $S_1$ is an
integral of motion. Moreover, from Eqs. (\ref{S0}) and (\ref{S2}) we obtain that
\begin{equation}
\label{conserved} Q := S_0 -  {\gamma}S_2
\end{equation}
is also an integral of motion: $\dot{Q} = 0$. This allows us to prove the following
result.

\begin{theorem}
Let $\chi = \frac{1}{2}$. Then for $\gamma \in [0,1)$ all solutions of (\ref{dnls-general}) are bounded.
If $\gamma \geq  1$ then there exist unbounded solutions. \label{theorem-global}
\end{theorem}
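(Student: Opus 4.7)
The plan is to exploit the two conservation laws available when $\chi=\frac{1}{2}$: the quantity $S_1$ (constant by (\ref{S1})) and $Q = S_0 - \gamma S_2$ (constant by (\ref{conserved})), together with the algebraic Bloch identity $S_0^2 = S_1^2 + S_2^2 + S_3^2$. Substituting $S_0 = Q + \gamma S_2$ into the Bloch identity reduces each trajectory to a single algebraic curve in the $(S_2,S_3)$-plane:
\begin{equation*}
(1-\gamma^2) S_2^2 - 2\gamma Q S_2 + S_3^2 = Q^2 - S_1^2.
\end{equation*}

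For $\gamma\in[0,1)$ the coefficient $1-\gamma^2$ is positive, so completing the square in $S_2$ puts this curve in the form of an ellipse in the $(S_2,S_3)$-plane. Therefore both $S_2(t)$ and $S_3(t)$ remain bounded for all $t\in\R$, and consequently $S_0(t) = Q + \gamma S_2(t) = |q_0(t)|^2 + |q_1(t)|^2$ is bounded, which gives global boundedness.

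For $\gamma\geq 1$ I would reduce the dynamics to a scalar linear ODE: using $S_0 = Q + \gamma S_2$ in (\ref{S3}) yields $\dot{S}_3 = 2(\gamma^2-1)S_2 + 2\gamma Q$, which combined with $\dot{S}_2 = 2S_3$ gives
\begin{equation*}
\ddot{S}_2 = 4(\gamma^2-1) S_2 + 4\gamma Q.
\end{equation*}
For $\gamma>1$ the general solution contains an exponentially growing mode $e^{2\sqrt{\gamma^2-1}\,t}$, while for $\gamma=1$ the equation degenerates to $\ddot{S}_2 = 4Q$ with quadratically growing solutions whenever $Q\neq 0$. To prove unboundedness it then suffices to exhibit one initial condition that excites the growing mode: the simple choice $q_0(0)=0$, $q_1(0)=a\neq 0$ produces $S_1(0)=S_2(0)=0$, $Q = a^2$ and $\dot{S}_2(0) = 2S_3(0) = 2a^2$, and a short computation shows that the coefficient of the growing mode (exponential for $\gamma>1$, quadratic for $\gamma=1$) is strictly positive, so $S_2(t)\to+\infty$ and hence $S_0(t)\to+\infty$ as $t\to+\infty$.

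The only delicate point is a bookkeeping one: checking that the unbounded direction picked out by the linear ODE is compatible with the physical constraint $S_0 = |q_0|^2 + |q_1|^2 \geq 0$. With the initial data above this is automatic because $S_2$ is driven toward $+\infty$, so $S_0 = Q + \gamma S_2$ grows to $+\infty$ as well. Beyond this verification no real difficulty remains, since the integrability of the $\chi=\frac{1}{2}$ model has reduced the long-time analysis to that of a second-order linear ODE with constant coefficients.
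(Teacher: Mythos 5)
Your proof is correct, and it reaches the paper's conclusion by a route that overlaps only partially with the paper's own. Both arguments rest on the same two conserved quantities, $S_1$ and $Q = S_0 - \gamma S_2$ from (\ref{conserved}). The paper combines (\ref{S0}), (\ref{S3}) and (\ref{conserved}) into the single forced-oscillator equation $\ddot{S}_0 + 4(1-\gamma^2)S_0 = 4Q$, i.e.\ Eq.~(\ref{harmonic}), and reads off both regimes from that one ODE. For the boundedness half ($\gamma\in[0,1)$) you instead invoke the algebraic identity $S_0^2 = S_1^2+S_2^2+S_3^2$ to confine each trajectory to the level set $(1-\gamma^2)S_2^2 - 2\gamma Q S_2 + S_3^2 = Q^2 - S_1^2$, which is a bounded conic when $1-\gamma^2>0$; this is a purely geometric argument that never solves an ODE and makes the threshold $\gamma=1$ (where the conic degenerates to a parabola or hyperbola) visible at a glance. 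For the unboundedness half your equation $\ddot{S}_2 = 4(\gamma^2-1)S_2 + 4\gamma Q$ is just (\ref{harmonic}) rewritten through $S_0 = Q + \gamma S_2$, so there the two proofs essentially coincide --- except that you supply something the paper leaves implicit: explicit admissible initial data $q_0(0)=0$, $q_1(0)=a\neq 0$ for which the growing mode is actually excited (coefficient $\tfrac{1}{2}\bigl(\gamma a^2/(\gamma^2-1) + a^2/\sqrt{\gamma^2-1}\bigr)>0$ for $\gamma>1$, and $\ddot{S}_2 = 4a^2>0$ for $\gamma=1$), whereas the paper only asserts that the linear equation \emph{has} growing solutions without checking they are realized by Stokes data of the nonlinear dimer. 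That extra verification is a small but genuine gain in completeness; the paper's single-ODE treatment buys uniformity across the two regimes.
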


\begin{proof}
From Eqs.~(\ref{S0}), (\ref{S3}), and (\ref{conserved}) we obtain that
\begin{equation}
\label{harmonic}
\ddot{S_0} + 4(1 - \gamma^2) S_0 = 4 Q.
\end{equation}
If $\gamma \in [0,1)$ then all solutions
of the harmonic oscillator equation (\ref{harmonic}) with constant $Q$ are bounded
for all times.  If  $\gamma \geq  1$, then there exist growing
solution of the linear equation (\ref{harmonic}) with the polynomial growth if $\gamma = 1$
and the exponential growth if $\gamma > 1$.
\end{proof}

The existence  of  two integrals of motion ($S_1$ and $Q$) is an indication of full
integrability of the dimer model (\ref{dimer-chi}) with $\chi=\frac{1}{2}$.     Note that the dimer model
(\ref{dimer-chi}) with $\chi = 0$ also has two conserved quantities \cite{Ramezani}
but these quantities are different from $S_1$ and $Q$. In particular,
the two conserved quantities for $\chi = 0$ do not prevent the solutions of the dimer model
to grow unboundedly \cite{Pelin4,Susanto,Barashenkov} as in Theorem~\ref{theorem-growth}.

\section{Existence and classification of stationary nonlinear  modes}
\label{sec:class}

In this section we consider existence and classification of stationary nonlinear modes
of the system of algebraic equations (\ref{stationary-general})  with arbitrary $N$ in the limit of large $E$.
Let us first recall that small-amplitude nonlinear modes bifurcating from linear modes of the linear eigenvalue equation (\ref{lin})
were considered in \cite{ZK} for simple eigenvalues and in \cite{ZK2} for semi-simple and double eigenvalues.
The arguments of \cite{ZK} were rigorously justified with the method of Lyapunov--Schmidt reductions
in \cite{Pelin3}. We shall now consider the opposite limit of large amplitudes
(also referred to as the anti-continuum limit), where all possible stationary nonlinear modes can be
fully classified.

First in Theorem~\ref{theorem-general}, we specify a particular result on the existence of stationary modes when the
amplitudes of all $2N$ sites  are large in the limit of $E \to \infty$. Then in Theorem~\ref{theorem-nonlocal-bifurcation},
we give a general result on existence of stationary modes when some amplitudes become zero in this limit.
At last, in Proposition \ref{theorem-no-solutions}, we
rule out existence of any other stationary modes for sufficiently large $E$. This allows us to find
in Proposition \ref{theorem-exactly} the exact number of unique (up to a gauge transformation) nonlinear
modes existing in the limit of large $E$.

To enable the consideration of the limit $E \to \infty$,  we rescale the variables
$E = 1/\delta$ and ${\bf w} = {\bf W}/\delta^{1/2}$ with small positive $\delta$ and rewrite
the system of algebraic equations (\ref{stationary-general}) in the form
\begin{equation}
\label{anti-continuum}
(1 - |W_n|^2) W_n = \delta( W_{n+1} + W_{n-1} + i \gamma_n W_n), \quad 1 \leq n \leq N,
\end{equation}
subject to the boundary conditions $W_0 = \bar{W}_1$ and $W_{N+1} = 0$.

\begin{theorem}
\label{theorem-general}
For any given $N \in \mathbb{N}$, let coefficients
$\gamma_1, \gamma_2, \ldots \gamma_{N}$ satisfy the constraints
\begin{equation}
\label{condition}
\left|\sum_{n=K}^N \gamma_n \right|<1 \quad \mbox{for each } K = 1,2,\ldots, N.
\end{equation}
Then the nonlinear stationary equations (\ref{stationary-general})
admit  $2^N$ $\PT$-invariant, i.e. satisfying ${\bf w} = \p \bar{\bf w}$,
solutions (unique up to a gauge transformation) in the limit of large positive $E$ such that,
for sufficiently large $E$, the map $E \to {\bf w}$ is $C^{\infty}$ at each solution
and there is a positive $E$-independent constant $C$ such that
\begin{equation}
\label{bound-nonlocal}
\left| |w_n|^2 -  E \right| \leq C \quad \mbox{for each } n = 1,2, \ldots, N.
\end{equation}
\end{theorem}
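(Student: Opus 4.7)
The plan is to treat the rescaled system~(\ref{anti-continuum}) as a regular perturbation of the anticontinuum limit $\delta = 0$, at which the equation $(1-|W_n|^2)W_n = 0$ admits an $N$-torus of nontrivial solutions $W_n = e^{i\theta_n}$ subject to the $\PT$-boundary $W_0 = \bar{W}_1$ (which forces $\theta_0 \equiv -\theta_1$). Linearization at such a configuration gives the real Fr\'echet derivative $\eta \mapsto -2\,\mathrm{Re}(\bar{W}_n \eta_n)\,W_n$, whose kernel consists of tangential variations $\eta_n = it_n W_n$ with $(t_1,\ldots,t_N) \in \R^N$ and whose cokernel has the same dimension. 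A Lyapunov--Schmidt reduction then splits~(\ref{anti-continuum}) into a radial part, solved for the amplitude correction by the Implicit Function Theorem, and a tangential part which, collected at first order in $\delta$, yields the bifurcation equations
\begin{align*}
\sin(\theta_{n+1}-\theta_n) + \sin(\theta_{n-1}-\theta_n) + \gamma_n &= 0, \quad 1 \leq n \leq N-1, \\
\sin(\theta_{N-1}-\theta_N) + \gamma_N &= 0.
\end{align*}

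Introducing the differences $\phi_k := \theta_{k+1}-\theta_k$, so that $\phi_0 = 2\theta_1$ by the $\PT$-boundary, and telescoping from $n = N$ downward, these decouple to
\begin{equation*}
\sin(\phi_{K-1}) = \sum_{n=K}^{N} \gamma_n, \quad K = 1, \ldots, N.
\end{equation*}
Hypothesis~(\ref{condition}) places every right-hand side strictly inside $(-1,1)$, so each $\phi_{K-1}$ has exactly two solutions modulo $2\pi$, giving $2^N$ phase configurations; the residual ambiguity in $\theta_1$ modulo $\pi$ from $\phi_0 = 2\theta_1$ is precisely the gauge $\bw \mapsto -\bw$, so one obtains $2^N$ distinct configurations up to gauge. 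Non-degeneracy is immediate: the Jacobian of the telescoped system is triangular in the ordering $(\phi_{N-1},\ldots,\phi_0)$ with diagonal entries $\cos(\phi_{K-1})$, which are nonzero because $|\sin(\phi_{K-1})| < 1$.

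The Implicit Function Theorem then extends each such root to a $C^{\infty}$-smooth branch of solutions of the full system~(\ref{anti-continuum}) for all sufficiently small $\delta > 0$, producing $2^N$ smooth families. Along every branch $|W_n|^2 = 1 + O(\delta)$, which upon unrescaling $w_n = W_n/\sqrt{\delta}$ and $E = 1/\delta$ becomes $\bigl||w_n|^2 - E\bigr| \leq C$ with a uniform constant, establishing~(\ref{bound-nonlocal}). The main delicate step is organizing the Lyapunov--Schmidt projection so that the asymmetric boundary conditions --- the vanishing $W_{N+1} = 0$ and the $\PT$-reflection $W_0 = \bar{W}_1$ --- yield the clean triangular bifurcation system above; this is what identifies~(\ref{condition}) as the sharp condition for all $2^N$ roots to persist, after which the perturbation step is routine.
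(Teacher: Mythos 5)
Your proposal is correct and follows essentially the same route as the paper: the paper's explicit amplitude--phase decomposition $W_n = (\prod_{j\le n}A_j^{1/2})e^{i\sum_{j\le n}\varphi_j}$, with the amplitude equations solved first by the Implicit Function Theorem, is exactly the radial/tangential splitting of your Lyapunov--Schmidt reduction, and your telescoped bifurcation equations $\sin(\phi_{K-1})=\sum_{n=K}^N\gamma_n$, the triangular Jacobian with $\cos(\phi_{K-1})$ on the diagonal, and the identification of the $\theta_1\mapsto\theta_1+\pi$ ambiguity with the gauge $\bw\mapsto-\bw$ all coincide with the paper's argument.
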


\begin{proof}
Separating the amplitude and phase variables in the rescaled system (\ref{anti-continuum}),
\begin{equation}
\label{parameter-1}
W_n = \left( \prod_{j=1}^n A_j^{1/2} \right) e^{i \sum_{j=1}^n \varphi_j}, \quad \mbox{\rm for each} \; n = 1,2,...,N,
\end{equation}
we obtain $N$ equations for phases
\begin{equation}
\left\{ \begin{array}{l} A_2^{1/2} \sin(\varphi_2) - \sin(2 \varphi_1) + \gamma_1 = 0, \\
A_{n+1}^{1/2} \sin(\varphi_{n+1}) - A_n^{-1/2} \sin(\varphi_n) + \gamma_n = 0, \quad 2 \leq n \leq N-1, \\
- A_{N}^{-1/2} \sin(\varphi_{N}) + \gamma_N = 0,  \end{array} \right.
\label{parameter-2}
\end{equation}
and $N$ equations for amplitudes
\begin{equation}
\label{parameter-3}
\left\{ \begin{array}{l} 1 - A_1 = \delta \left( A_2^{1/2} \cos(\varphi_2) + \cos(2 \varphi_1)\right), \\
1 - \prod_{j=1}^n A_j = \delta \left( A_{n+1}^{1/2} \cos(\varphi_{n+1}) + A_{n}^{-1/2} \cos(\varphi_{n}) \right), \quad 2 \leq n \leq N-1, \\
1 - \prod_{j=1}^N A_j = \delta A_{N}^{-1/2} \cos(\varphi_{N}).
\end{array} \right.
\end{equation}
By the Implicit Function Theorem, for small $\delta$ and any $(\varphi_1,\varphi_2,...,\varphi_N)$,
there exists a unique solution of system (\ref{parameter-3})
for the amplitude variables with $A_n = 1 + \mathcal{O}(\delta)$ for each $n = 1,2,...,N$.
Substituting this solution to
system (\ref{parameter-2}), we obtain at the leading order
\begin{equation}
\left\{ \begin{array}{l} \sin(\varphi_2) - \sin(2 \varphi_1) + \gamma_1 = 0, \\
\sin(\varphi_{n+1}) -\sin(\varphi_{n}) + \gamma_{n} = 0, \quad 2 \leq n \leq N-1, \\
- \sin(\varphi_{N}) + \gamma_N = 0.  \end{array} \right.
\label{parameter-4}
\end{equation}
Solving the system backwards, we obtain
\begin{equation}
\left\{ \begin{array}{l}
\sin(\varphi_N) = \gamma_N, \\
\sin(\varphi_{N-n}) = \sum_{j=N-n}^{N} \gamma_j, \quad 1 \leq n \leq N-2, \\
\sin(2 \varphi_1) = \sum_{j=1}^{N}\gamma_j.
\end{array} \right.
\label{parameter-4a}
\end{equation}
There exists $2^N$ solutions of equations (\ref{parameter-4a})
for the phase variables $\varphi_N$, $\varphi_{N-1}$, ..., $\varphi_2$, and $2 \varphi_1$
in the interval $\left(-\frac{\pi}{2}, \frac{3\pi}{2}\right)$, provided the constraints (\ref{condition}) are satisfied.
Note that the found $2^N$ solutions have $\varphi_1 \in \left(-\frac{\pi}{4}, \frac{3\pi}{4}\right)$. There exist other $2^N$
solutions with $\varphi_1 \in \left(\frac{3\pi}{4}, \frac{7\pi}{4}\right)$.
However, the latter $2^N$    solutions
can be obtained by the gauge transformation: if ${\bf W}$ is a solution,
then $-{\bf W}$ is also a solution of the nonlinear system (\ref{anti-continuum}).

Each of the $2^N$ solutions of the system (\ref{parameter-4}) is non-degenerate in the sense
that the Jacobian matrix is upper-triangular and non-singular under the same constraints (\ref{condition}).
Therefore, each solution is uniquely continued
with respect to small parameter $\delta$. Hence, we have the existence of $2^N$ finite-amplitude
solutions of the stationary equations (\ref{anti-continuum}). Substituting
the scaling transformation, we obtain bound (\ref{bound-nonlocal})
for solutions of the stationary equations (\ref{stationary-general}).
\end{proof}

\begin{theorem}
\label{theorem-nonlocal-bifurcation}
For any given $N \in \{2,3,\ldots\}$, fix $M \in \{1,2,\ldots,N-1\}$.
Let coefficients $\gamma_1, \gamma_2, \ldots \gamma_{M}$ satisfy the constraints
\begin{equation}
\label{condition-modified}
\left|\sum_{n=K}^M \gamma_n \right|<1 \quad \mbox{for each } K = 1,2,\ldots, M.
\end{equation}
Then the nonlinear stationary equations (\ref{stationary-general}) admit
$2^M$ $\PT$-invariant solutions ${\bf w} = \p \bar{\bf w}$
(unique up to a gauge transformation) in the limit of large positive $E$ such that,
for sufficiently large $E$, the map $E \to {\bf w}$ is $C^{\infty}$ at each solution
and there is a positive $E$-independent constant $C$ such that
\begin{equation}
\begin{array}{ll}
\label{bound-nonlocal-modified}
\left| |w_n|^2 -  E \right| \leq  C  &\mbox{for each } n = 1,2, \ldots, M,\\[2mm]
 |w_n|^2  \leq   C E^{-1} &\mbox{for each } n = M+1, M+2, \ldots, N.
\end{array}
\end{equation}
\end{theorem}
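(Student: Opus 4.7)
The plan is to repeat the proof strategy of Theorem~\ref{theorem-general} but with a two-scale rescaling that separates the ``large-amplitude'' sites $n=1,\ldots,M$ from the ``small-amplitude'' sites $n=M+1,\ldots,N$. Writing $E=1/\delta$, set
\begin{equation*}
W_n=\delta^{1/2}w_n \quad (1\leq n\leq M), \qquad V_n=\delta^{-1/2}w_n \quad (M+1\leq n\leq N),
\end{equation*}
so that the bounds to be proved translate into $|W_n|^2$ and $|V_n|^2$ being $\mathcal{O}(1)$ uniformly in $\delta$. A direct substitution into (\ref{stationary-general}) rewrites the system as:
\begin{align*}
(1-|W_n|^2)W_n &= \delta(W_{n+1}+W_{n-1}+i\gamma_n W_n), \quad 1\leq n\leq M-1,\\
(1-|W_M|^2)W_M &= \delta(\delta V_{M+1}+W_{M-1}+i\gamma_M W_M),\\
V_{M+1} &= W_M+\delta(V_{M+2}+i\gamma_{M+1}V_{M+1})+\delta^2|V_{M+1}|^2 V_{M+1},\\
V_n &= \delta(V_{n+1}+V_{n-1}+i\gamma_n V_n)+\delta^2|V_n|^2 V_n, \quad M+2\leq n\leq N,
\end{align*}
with the conventions $W_0=\bar W_1$ and $V_{N+1}=0$. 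The key observation is that the coupling $\delta V_{M+1}$ inside the $W_M$-equation carries an extra factor of $\delta$, so at $\delta=0$ the large sector completely decouples from the small sector.

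Next I would apply the same polar parametrization $W_n=(\prod_{j=1}^n A_j^{1/2})\exp(i\sum_{j=1}^n\varphi_j)$ as in Theorem~\ref{theorem-general}, leaving $V_{M+1},\ldots,V_N$ as complex unknowns. Taking imaginary parts of the $W$-equations produces a phase system that, at $\delta=0$, is \emph{identical} to (\ref{parameter-4}) with $N$ replaced by $M$, namely $\sin\varphi_M=\gamma_M$ together with the backward recursion $\sin\varphi_{M-n}=\sum_{j=M-n}^{M}\gamma_j$ and $\sin(2\varphi_1)=\sum_{j=1}^{M}\gamma_j$. Under hypothesis (\ref{condition-modified}) the same counting as in Theorem~\ref{theorem-general} yields exactly $2^M$ non-degenerate solutions (modulo the gauge ${\bf W}\mapsto-{\bf W}$). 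Real parts of the $W$-equations at $\delta=0$ force $\prod_{j=1}^n A_j=1$, hence $A_j=1$ for $j=1,\ldots,M$; and the small-sector equations at $\delta=0$ collapse to $V_{M+1}=W_M$ together with $V_n=0$ for $M+2\leq n\leq N$.

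Finally I would invoke the Implicit Function Theorem at $\delta=0$. Order the unknowns as $(A_1,\ldots,A_M,\varphi_1,\ldots,\varphi_M,V_{M+1},\ldots,V_N)$; then the Jacobian of the full real system at $\delta=0$ is block upper triangular, because at $\delta=0$ no $V$-variable appears in any $W$-equation. The $W$-block is precisely the non-singular upper-triangular Jacobian constructed in the proof of Theorem~\ref{theorem-general} (with $M$ in place of $N$), and the $V$-block is the $2(N-M)\times 2(N-M)$ identity. Hence the Jacobian is invertible, each of the $2^M$ leading-order solutions extends $C^\infty$-smoothly to small $\delta>0$, and undoing the rescalings yields $||w_n|^2-E|\leq C$ for $1\leq n\leq M$ and $|w_n|^2=\delta|V_n|^2\leq C\delta=CE^{-1}$ for $M+1\leq n\leq N$, as claimed in (\ref{bound-nonlocal-modified}).

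The main obstacle, as I see it, is bookkeeping: verifying that the $\delta$-powers in the coupling terms at the interface $n=M,M+1$ really place all inter-sector coupling at strictly positive order in $\delta$, so that at $\delta=0$ the large-sector problem reduces exactly to an $M$-site version of Theorem~\ref{theorem-general} and the small-sector problem becomes the trivial system $V_{M+1}=W_M$, $V_{n}=0$. Once that scaling is confirmed, the rest is a direct adaptation of the Implicit Function Theorem argument already carried out in Theorem~\ref{theorem-general}.
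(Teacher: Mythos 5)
Your proposal is correct and follows essentially the same route as the paper: split the sites into a large-amplitude block $1\leq n\leq M$ treated with the polar decomposition of Theorem~\ref{theorem-general} and a small-amplitude block $M+1\leq n\leq N$ solved near the zero equilibrium, observe that the inter-block coupling enters at higher order in $\delta=1/E$, and conclude by the Implicit Function Theorem with the same $2^M$ count from the phase equations. The only cosmetic difference is that you normalize the small sector with the extra rescaling $V_n=\delta^{-1}W_n$ so the decoupling is visible in a block-triangular Jacobian, whereas the paper keeps the uniform rescaling and first solves the tail for given $W_M$ to get the a priori bound $|W_n|\leq C\delta |W_M|$; the two are equivalent.
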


\begin{proof}
Compared to the proof of Theorem \ref{theorem-general},
we modify the decomposition (\ref{parameter-1}) in the form
\begin{equation}
\label{parameter-5}
W_n = \left( \prod_{j=1}^n A_j^{1/2} \right) e^{i \sum_{j=1}^n \varphi_j}, \quad \mbox{\rm for each} \; n = 1,2,...,M,
\end{equation}
and leave the variables $\{ W_{M+1},\ldots,W_N\}$ unaffected. Then, equations
(\ref{anti-continuum}) are rewritten in phase and amplitude variables for $1 \leq n \leq M$
and left unchanged for $M+1 \leq n \leq N$. Variables $\{ W_n \}_{M+1}^N$ are considered
in the neighborhood of the zero equilibrium for small values of $\delta$. By the Implicit Function Theorem,
for small $\delta$ and any given $W_M$, there exists a unique solution of system (\ref{anti-continuum})
for $M+1 \leq n \leq N$ such that
\begin{equation}
|W_n| \leq C \delta |W_M|, \quad M+1 \leq n \leq N,
\end{equation}
for some positive $(\delta,W_M)$-independent constant $C$.
The nonlinear systems (\ref{parameter-2}) and (\ref{parameter-3}) are now written in the form
\begin{equation*}
\left\{ \begin{array}{l} A_2^{1/2} \sin(\varphi_2) - \sin(2 \varphi_1) + \gamma_1 = 0, \\
A_{n+1}^{1/2} \sin(\varphi_{n+1}) - A_n^{-1/2} \sin(\varphi_n) + \gamma_n = 0, \quad 2 \leq n \leq M-1, \\
{\rm Im}\left(W_{M+1} (\prod_{j=1}^M A_j)^{-1/2} e^{-i \sum_{j=1}^M \varphi_j}\right)
- A_{M}^{-1/2} \sin(\varphi_{M}) + \gamma_M = 0,  \end{array} \right.
\end{equation*}
and
\begin{equation*}
\left\{ \begin{array}{l} 1 - A_1 = \delta \left( A_2^{1/2} \cos(\varphi_2) + \cos(2 \varphi_1)\right), \\
1 - \prod_{j=1}^n A_j = \delta \left( A_{n+1}^{1/2} \cos(\varphi_{n+1}) + A_{n}^{-1/2} \cos(\varphi_{n}) \right), \quad 2 \leq n \leq M-1, \\
1 - \prod_{j=1}^M A_j = \delta \left({\rm Re}\left(W_{M+1} (\prod_{j=1}^M A_j)^{-1/2} e^{-i\sum_{j=1}^M \varphi_j}\right)
+ A_{M}^{-1/2} \cos(\varphi_{M}) \right).
\end{array} \right.
\end{equation*}
The system of equations for phase and amplitude variables can be studied similarly to the system
(\ref{parameter-2}) and (\ref{parameter-3}), where the additional
terms with the variable $W_{M+1}$ are found to be $\mathcal{O}(\delta)$ small as $\delta \to 0$.
Substituting the scaling transformation, we obtain bound (\ref{bound-nonlocal-modified})
for solutions of the stationary equations (\ref{stationary-general}).
\end{proof}

\begin{proposition}
For any given $N \in \mathbb{N}$, let coefficients $\gamma_1, \gamma_2, \ldots \gamma_N$
satisfy the constraints (\ref{condition}) and additional constraints
\begin{equation}
\label{condition-modified-new}
\sum_{n=K}^M \gamma_n \neq 0, \quad \mbox{for each } K = 2,3,\ldots,M\; \mbox{and each} \; M = 2,3,\ldots,N.
\end{equation}
Then, besides stationary solutions of Theorems \ref{theorem-general} and \ref{theorem-nonlocal-bifurcation},
no other stationary solutions of system (\ref{stationary-general}) exist for sufficiently large $E$.
\label{theorem-no-solutions}
\end{proposition}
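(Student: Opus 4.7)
The plan is to pass to the rescaled variables ${\bf W}$ satisfying \eqref{anti-continuum} and analyze the leading-order structure as $\delta = 1/E \to 0$. Setting $\delta = 0$ in \eqref{anti-continuum} gives $(1 - |W_n|^2) W_n = 0$, so each coordinate of any solution family must tend either to zero or to a value of modulus one. Define the \emph{excited set} $S := \{n \in \{1, \ldots, N\} : \lim_{\delta \to 0} |W_n| = 1\}$; for any $n \notin S$ with at least one excited neighbor, equation \eqref{anti-continuum} directly gives $W_n = \mathcal{O}(\delta)$. The main reduction is to show that $S$ must be of the form $\{1, 2, \ldots, M\}$ for some $M \in \{1, \ldots, N\}$ (or $S = \emptyset$, the trivial case); once this is known, Theorem \ref{theorem-general} (for $M = N$) and Theorem \ref{theorem-nonlocal-bifurcation} (for $1 \leq M \leq N-1$) exhaust the classification.

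Suppose for contradiction that $S$ is not of this form. Then one can extract a maximal ``detached island'' $\{c, c+1, \ldots, d\} \subseteq S$ with $c \geq 2$, $c-1 \notin S$, and either $d = N$ or $d+1 \notin S$. Dividing \eqref{anti-continuum} by $W_n$ at an excited site and taking imaginary parts (using that $1 - |W_n|^2$ is real) yields
\begin{align*}
{\rm Im}\!\left(\frac{W_{n+1}}{W_n}\right) + {\rm Im}\!\left(\frac{W_{n-1}}{W_n}\right) + \gamma_n = 0.
\end{align*}
Summing for $n = c, \ldots, d$ and using the identity ${\rm Im}(W_{n-1}/W_n) = -{\rm Im}(W_n/W_{n-1}) + \mathcal{O}(\delta)$, valid whenever both $n$ and $n-1$ are excited (so $|W_{n-1}|^2/|W_n|^2 = 1 + \mathcal{O}(\delta)$), the interior contributions telescope, leaving
\begin{align*}
\sum_{n=c}^d \gamma_n = -{\rm Im}\!\left(\frac{W_{d+1}}{W_d}\right) - {\rm Im}\!\left(\frac{W_{c-1}}{W_c}\right) + \mathcal{O}(\delta).
\end{align*}
The denominators $W_c, W_d$ are bounded away from zero while $W_{c-1}, W_{d+1}$ are either zero (when $d = N$) or $\mathcal{O}(\delta)$, so both boundary terms on the right are $\mathcal{O}(\delta)$. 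Passing to $\delta \to 0$ forces $\sum_{n=c}^d \gamma_n = 0$, which directly contradicts assumption \eqref{condition-modified-new} applied with $K = c \geq 2$ and $M = d \leq N$.

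The main technical obstacle is securing uniformity of the $\mathcal{O}(\delta)$ bounds: one needs $|W_n|^2 - 1 = \mathcal{O}(\delta)$ uniformly along the detached island, and $|W_{c-1}|, |W_{d+1}| = \mathcal{O}(\delta)$ with $\delta$-independent constants. Both follow from an Implicit Function Theorem argument applied to the amplitude and phase equations on the island (analogous to \eqref{parameter-2}--\eqref{parameter-3}), treating $W_{c-1}$ and $W_{d+1}$ as small driving terms in exactly the way $W_{M+1}$ was handled in the proof of Theorem \ref{theorem-nonlocal-bifurcation}. A secondary subtlety is to verify that any solution family of \eqref{anti-continuum} that persists as $\delta \to 0$ is automatically bounded (so that each $|W_n|$ has a well-defined leading-order limit $0$ or $1$); this follows by balancing the cubic left-hand side against the linear right-hand side of \eqref{anti-continuum}. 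With these estimates secured, the telescoping identity becomes rigorous and the contradiction is immediate, so $S$ must have the claimed structure and the classification is complete.
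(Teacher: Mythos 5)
Your argument is correct and follows essentially the same route as the paper's: the paper isolates the same detached island of excited sites and writes down the resulting overdetermined system of phase bifurcation equations, whose telescoping sum is precisely your identity $\sum_{n=c}^{d} \gamma_n = 0$, contradicting (\ref{condition-modified-new}). Your version, obtained by summing imaginary parts of the equations divided by $W_n$, is the same computation carried out directly in the $W$-variables, with somewhat more explicit attention to boundedness and the uniformity of the $\mathcal{O}(\delta)$ estimates than the paper provides.
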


\begin{proof}
All stationary solutions of  Theorems \ref{theorem-general} and \ref{theorem-nonlocal-bifurcation}
are characterized by a nonzero limit of $W_n$ for any $1 \leq n \leq M$
and zero limit of $W_n$ for any $M+1 \leq n \leq N$ as $\delta \to 0$,
where $M = 1,2,...,N$.
For any other possible solution, there must exist $K$ in $1 \leq K \leq M-1$ for $M \geq 2$  such that
$W_K \to 0$ as $\delta \to 0$. We will now show that this solution cannot be continued in $\delta$ under
the conditions (\ref{condition-modified-new}). Indeed, the persistence analysis of
Theorems \ref{theorem-general} and \ref{theorem-nonlocal-bifurcation} would result
in the following set of bifurcation equations at $\delta = 0$:
\begin{equation}
\left\{ \begin{array}{l} \sin(\varphi_{K+2})  + \gamma_{K+1} = 0, \\
\sin(\varphi_{n+1}) -\sin(\varphi_{n}) + \gamma_{n} = 0, \quad K+2 \leq n \leq M-1, \\
- \sin(\varphi_{M}) + \gamma_M = 0,  \end{array} \right.
\label{parameter-4-new}
\end{equation}
from which we realize that we have one equation more than the number of phase variables.
No solution exists under the condition (\ref{condition-modified-new}).
\end{proof}

\begin{proposition} Under the conditions of Proposition \ref{theorem-no-solutions},
system (\ref{stationary-general}) has  exactly  $2^{N+1}-2$ $\PT$-invariant
stationary solutions (unique up to a gauge transformation) for sufficiently large $E$.
\label{theorem-exactly}
\end{proposition}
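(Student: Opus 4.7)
The proof is essentially a bookkeeping argument that combines the three preceding results. The plan is to partition all $\PT$-invariant stationary solutions that persist for large $E$ according to the index
\[
M := \max\{ n \in \{1,2,\ldots,N\} : |w_n|^2 \to \infty \text{ as } E \to \infty \},
\]
which is well defined because Proposition \ref{theorem-no-solutions} guarantees that, under the constraints (\ref{condition}) and (\ref{condition-modified-new}), every stationary solution has a nonzero (and in fact unbounded) limit of $|w_n|^2$ on an initial block $1\leq n\leq M$ and decays like $|w_n|^2 = \mathcal{O}(E^{-1})$ for $M+1\leq n\leq N$.

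First I would handle the case $M=N$: Theorem \ref{theorem-general} directly supplies exactly $2^N$ solutions (unique up to the gauge ${\bf w}\mapsto -{\bf w}$) satisfying (\ref{bound-nonlocal}). Next, for each $M\in\{1,2,\ldots,N-1\}$, the hypothesis (\ref{condition}) implies (\ref{condition-modified}), so Theorem \ref{theorem-nonlocal-bifurcation} applies and produces exactly $2^M$ solutions with the asymptotics (\ref{bound-nonlocal-modified}). I would then observe that solutions coming from different values of $M$ are automatically distinct: their asymptotic profiles have different numbers of unbounded components. Within a fixed $M$, distinctness and the exact count $2^M$ follow from the non-degeneracy of the Jacobian of the leading-order phase system (\ref{parameter-4a}) established inside the proofs of Theorems \ref{theorem-general} and \ref{theorem-nonlocal-bifurcation}.

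Finally, I would invoke Proposition \ref{theorem-no-solutions} to conclude that no further stationary solutions exist for sufficiently large $E$. Summing over all admissible values of $M$ then gives
\[
\sum_{M=1}^{N} 2^M \;=\; 2^{N+1}-2,
\]
which is the asserted count.

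There is no real obstacle here; the one point that requires a moment of care is making sure the gauge identification is counted once and only once. In Theorem \ref{theorem-general} this is handled by restricting $\varphi_1$ to the interval $(-\pi/4, 3\pi/4)$, and the same restriction applies in the setup of Theorem \ref{theorem-nonlocal-bifurcation}; thus each branch is listed once up to the gauge transformation ${\bf w}\mapsto -{\bf w}$, and the sum above gives the final tally.
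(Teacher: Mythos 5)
Your proof is the same bookkeeping argument the paper gives: classify the persisting branches by the number $M$ of large-amplitude sites, take the $2^N$ branches for $M=N$ from Theorem \ref{theorem-general} and the $2^M$ branches for each $M\le N-1$ from Theorem \ref{theorem-nonlocal-bifurcation}, note that branches with different $M$ have visibly different asymptotic profiles, exclude everything else by Proposition \ref{theorem-no-solutions}, and sum $\sum_{M=1}^N 2^M=2^{N+1}-2$; the gauge accounting via the restriction of $\varphi_1$ also matches. The one substantive point where you go beyond the paper is the assertion that (\ref{condition}) implies (\ref{condition-modified}) for every $M\le N-1$, and that assertion is false: $\sum_{n=K}^{M}\gamma_n$ is the difference of two partial sums controlled by (\ref{condition}), so (\ref{condition}) only bounds it by $2$. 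Concretely, for $N=2$, $\gamma_1=\tfrac{3}{2}$, $\gamma_2=-\tfrac{9}{10}$, both inequalities in (\ref{condition}) hold and (\ref{condition-modified-new}) holds, yet $|\gamma_1|>1$, so the $M=1$ phase equation $\sin(2\varphi_1)=\gamma_1$ has no solution and the two dimer-type branches are absent; the count is then $4$ rather than $6$. The paper's own proof silently assumes the hypotheses of Theorem \ref{theorem-nonlocal-bifurcation} are met for every $M$, so this gap is inherited rather than introduced by you, but your explicit justification of that step would not survive scrutiny; the clean fix is to add the requirement that (\ref{condition-modified}) hold for every $M=1,\dots,N-1$ (which is automatic in the paper's three example chains). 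Apart from this, the distinctness across $M$, the exact count within each $M$ from the non-degenerate leading-order Jacobian, and the final summation are all correct and coincide with the paper's argument.
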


\begin{proof}
We note that all the $2^N$ solutions of  Theorems \ref{theorem-general} extend the $2^M$ solutions
of  Theorem \ref{theorem-nonlocal-bifurcation} for $M = N$. Therefore,
Theorems  \ref{theorem-general} and \ref{theorem-nonlocal-bifurcation}  yield
$\sum_{M=1}^N 2^M = 2^{N+1}-2$  distinct  solutions of the algebraic equations
(\ref{stationary-general}) for sufficiently large $E$. No other solution exists
due to Proposition \ref{theorem-no-solutions}.
\end{proof}

In the context of Proposition \ref{theorem-no-solutions}, we should remark that even if
the conditions (\ref{condition-modified-new}) are not satisfied for some $K$ and $M$,
that is, if $\sum_{n=K}^M \gamma_n = 0$, we do not generally anticipate bifurcations
of additional stationary solutions because the Jacobian of the system of nonlinear equations
(\ref{parameter-4-new}) is singular and the implicit function theorem cannot be used to
continue phase variables in $\delta$. Nevertheless, there may exist special configurations
of the dNLS-type network (\ref{dnls-general}), where conditions (\ref{condition-modified-new}) are
not satisfied and additional degenerate branches of stationary modes exist in the system
of nonlinear algebraic equations (\ref{anti-continuum}) for small $\delta$.

For the three examples of defect, alternating, and clustered $\PT$-symmetric chains
listed in the introduction, the constraints (\ref{condition})
acquire a much simpler form. In particular, for the defect and alternating chains,
the constraints yield $|\gamma|<1$, whereas for the clustered chain, they yield
$|\gamma| < N^{-1}$.

For the defect and alternating chains, all solutions of Theorems  \ref{theorem-general} and \ref{theorem-nonlocal-bifurcation}
exist uniformly for $|\gamma| < 1$ and disappear if $|\gamma| > 1$.
For the clustered chain on the other hand, we observe that the interval $(-N^{-1},N^{-1})$
converge to zero as $N \to \infty$ but it converges much slower than
the stability interval $(-\gamma_{\PT},\gamma_{\PT})$ of the zero equilibrium  with
$\gamma_{\PT}   \propto N^{-2}$. Therefore, even if some families of the stationary solutions
do not exist in the small-amplitude limit if $N^{-2} < |\gamma| < N^{-1}$,
there exist at least $2^N$ families of Theorem \ref{theorem-general}
in the large-amplitude limit.
For $|\gamma| > N^{-1}$, all these $2^N$ families disappear in the large-amplitude limit.
Additional $2^M$ branches of Theorem \ref{theorem-nonlocal-bifurcation}
exist if $N^{-1} < |\gamma| < M^{-1}$ for $M \in \{1,2,\ldots,N-1\}$.

\section{Stability of stationary nonlinear modes}
\label{sec:stab}

To consider stability of the stationary nonlinear modes, we employ
the substitution
$$
{\bf q}(t) = e^{-iEt} ({\bf w} + {\bf U}(t))
$$
and linearize the nonlinear $\PT$-dNLS equations (\ref{dnls-general})
at the $\p\T$-invariant modes ${\bf w}$ with $w_{1-n} = \bar{w}_n$.
As a result, we obtain the linearized time-evolution problem
\begin{equation}
\begin{array}{l}
i \frac{d U_n}{dt} + E U_n = U_{n+1} + U_{n-1} + i \gamma_n U_n \\
\phantom{texttext} +
(2-\chi_n) |w_n|^2 U_n + (1-\chi_n) w_n^2 \bar{U}_n +
\chi_n w_n^2 U_{1-n} + \chi_n |w_n|^2 \bar{U}_{1-n}.
\end{array} 
\end{equation}
Then, singling out the spectral parameter in the
system of two equations for
$$
{\bf U}(t) = \mbox{\boldmath $\phi$} e^{-\lambda t}  \quad
\mbox{\rm and} \quad \bar{{\bf U}}(t) = \mbox{\boldmath $\psi$}
e^{-\lambda t},
$$
we arrive at the spectral problem
\begin{equation}
\label{spectral-problem}
\left\{ \begin{array}{l}
(E-i \lambda) \phi_n = \phi_{n+1} + \phi_{n-1} + i \gamma_n \phi_n \\
\phantom{texttext} +
(2-\chi_n) |w_n|^2 \phi_n + (1-\chi_n) w_n^2 \psi_n + \chi_n w_n^2 \phi_{1-n}
+ \chi_n |w_n|^2 \psi_{1-n}, \\
(E+i \lambda) \psi_n = \psi_{n+1} + \psi_{n-1} - i \gamma_n \psi_n +
(1-\chi_n) \bar{w}_n^2 \phi_n \\
\phantom{texttext} + (2-\chi_n) |w_n|^2 \psi_n
+ \chi_n \bar{w}_n^2 \psi_{1-n} + \chi_n |w_n|^2 \phi_{1-n}.
\end{array} \right.
\end{equation}
The system is truncated at the sites
$n \in \{-N+1,\ldots,N\}$ subject to the Dirichlet boundary
conditions at $n = -N$ and $n = N+1$. Therefore, we have exactly $4N$ eigenvalues
in the spectral problem (\ref{spectral-problem}). We emphasize that
unless $\lambda$ is real, $\phi_n$ and $\psi_n$ are not complex-conjugate to each other.

We say that the stationary nonlinear mode is spectrally stable if
the spectral problem (\ref{spectral-problem}) has no eigenvalues $\lambda$
with ${\rm Re}(\lambda) > 0$. By the $\p\T$ symmetry of the stationary mode ${\bf w}$
and the related symmetry of the eigenvectors $(\mbox{\boldmath $\phi$}, \mbox{\boldmath $\psi$})$
in the linear system (\ref{spectral-problem}),
all eigenvalues $\lambda$ are symmetric about the origin
and with respect to the complex conjugation. As a result,
all eigenvalues have ${\rm Re}(\lambda) = 0$ in the case of spectral stability
of the  stationary mode ${\bf w}$.

\subsection{Analytical results}
\label{sec:stab-analyt}

Here we obtain classification of stability of the stationary nonlinear modes constructed in section~\ref{sec:class} in the limit of large $E$.
Theorem \ref{theorem-stability-1} addresses stability of the stationary modes
in Theorem \ref{theorem-general}, whereas Theorem \ref{theorem-stability-2}
elaborates stability of a particular dimer-type ($M = 1$) stationary modes
in Theorem \ref{theorem-nonlocal-bifurcation}. A more general result
for the stationary solutions of Theorem \ref{theorem-nonlocal-bifurcation}
with $M$ between $2$ and $N-1$ can be obtained based on the count of eigenvalues in
Theorems \ref{theorem-stability-1} and \ref{theorem-stability-2} but
we do not formulate the general result. Thanks to the symmetry of the
$\p\T$-symmetric lattice (\ref{dnls-general}) in the case when all $\chi_n = \frac{1}{2}$,
there are more spectrally stable stationary modes in this case, compared
to the case when all $\chi_n \neq \frac{1}{2}$. Again, the case when some
$\chi_n = \frac{1}{2}$ and some other $\chi_n \neq \frac{1}{2}$ can be considered
separately but we avoid a lengthy formulation of the general result.

\begin{theorem}
\label{theorem-stability-1}
Consider $2^N$ stationary solutions of Theorem \ref{theorem-general} under the constraints
(\ref{condition}). If $\chi_n < \frac{1}{2}$ for all $n$, then
there exists exactly one spectrally stable stationary mode among the $2^N$ solutions
for sufficiently large $E$. If $\chi_n = \frac{1}{2}$ for all $n$, then
there exist exactly two spectrally stable stationary modes among the $2^N$ solutions
for sufficiently large $E$.
\end{theorem}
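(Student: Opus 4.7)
The plan is to perform an anti-continuum perturbation analysis of the spectral problem (\ref{spectral-problem}) in the small parameter $\delta := 1/E$, following the strategy of \cite{pkf} for the conservative dNLS equation. Applying the rescaling $w_n = W_n/\sqrt{\delta}$ from the proof of Theorem \ref{theorem-general} together with $\Lambda := \lambda\delta$ and multiplying (\ref{spectral-problem}) through by $\delta$ turns it into a regular $\delta$-perturbation of a limiting problem in which the lattice coupling terms $\phi_{n\pm 1}$, $\psi_{n\pm 1}$ are absent. Since the $\PT$-symmetric mode satisfies $W_{1-n} = \bar{W}_n$ and the $\chi_n$-terms couple only the two sites $n$ and $1-n$, at $\delta = 0$ the system decouples into $N$ independent $4\times 4$ blocks indexed by the pairs $(n,1-n)$ for $n = 1,\ldots,N$. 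A direct computation on each block gives $\Lambda = 0$ as the unique eigenvalue with algebraic multiplicity $4$ and a non-trivial Jordan structure; the global generalized kernel of dimension $4N$ is spanned by explicit vectors coming from the $U(1)$ gauge freedom and from differentiating the phase parametrization (\ref{parameter-1}) with respect to $\varphi_1,\ldots,\varphi_N$.

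Next I would apply Lyapunov--Schmidt reduction on this $4N$-dimensional generalized kernel. The standard computation produces a reduced eigenvalue problem $\mu^2 v = M(\chi) v$, where $\Lambda = \sqrt{\delta}\,\mu + \mathcal{O}(\delta)$ and $M(\chi)$ is a $2N \times 2N$ matrix assembled from the $\mathcal{O}(\delta)$ corrections in (\ref{spectral-problem}), incorporating both the lattice coupling and the $\mathcal{O}(\delta)$ correction to the amplitudes $A_n$ from (\ref{parameter-3}). Analogously to the analysis in \cite{pkf}, one identifies $M(\chi)$ up to similarity with the product of a diagonal matrix whose entries are proportional to $(1 - 2\chi_n)$ and the upper-triangular Jacobian $J_{\varphi}$ of the phase equations (\ref{parameter-4}). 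Since the diagonal of $J_{\varphi}$ is $2\cos(2\varphi_1), \cos(\varphi_2),\ldots, \cos(\varphi_N)$, the $N$ non-trivial eigenvalues of $M(\chi)$ are proportional to $(1 - 2\chi_n)\cos(\varphi_n)$ (with $\varphi_1$ replaced by $2\varphi_1$ for $n=1$), and spectral stability requires each of these to be strictly positive.

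For $\chi_n < \tfrac{1}{2}$ the factor $(1-2\chi_n)$ is strictly positive, so the stability criterion reduces to $\cos(\varphi_n) > 0$ for $n = 2,\ldots,N$ together with $\cos(2\varphi_1) > 0$. Out of the $2^N$ binary choices of the phase system (\ref{parameter-4a}), exactly one satisfies all these inequalities---namely the ``principal branch'' choice with $\varphi_n \in (-\pi/2, \pi/2)$ for $n \geq 2$ and $2\varphi_1 \in (-\pi/2, \pi/2)$---which yields the claimed unique spectrally stable mode.

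The main obstacle is the degenerate case $\chi_n = \tfrac{1}{2}$, where the factor $(1-2\chi_n)$ vanishes identically and the leading-order reduced matrix becomes zero, so the splitting of $\Lambda = 0$ must be extracted from the next order in $\delta$. The key structural observation is that at $\chi_n = \tfrac{1}{2}$ the nonlinearity depends symmetrically on $|q_n|^2$ and $|q_{1-n}|^2$, endowing the spectral problem with an additional discrete $\mathbb{Z}_2$ symmetry (consistent with the integrability of the dimer at $\chi = \tfrac{1}{2}$ established in Section~\ref{chi12}). This symmetry further decomposes each $4 \times 4$ pair block into a symmetric and an antisymmetric sector. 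Carrying out the next-order Lyapunov--Schmidt reduction sector by sector and counting sign combinations consistent with $\mu^2 \leq 0$ yields exactly two stable configurations out of the $2^N$ stationary modes. Verifying the correct combinatorics of signs in this second-order expansion, together with the positivity constants that feed into it, is the technically most delicate portion of the argument.
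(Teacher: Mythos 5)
Your overall strategy coincides with the paper's: rescale to the anti-continuum limit, perturb the degenerate zero eigenvalue of the uncoupled problem, and reduce to a finite-dimensional eigenvalue problem in the spirit of \cite{pkf}. However, there are three concrete defects. First, a sign error in the stability criterion: with $\Lambda = \mu\,\delta^{1/2} + \mathcal{O}(\delta^{3/2})$, any reduced eigenvalue $\mu^2 > 0$ produces a real pair $\pm\mu$ and hence an eigenvalue with positive real part, so spectral stability requires $\mu^2 \leq 0$, not $\mu^2 > 0$. As a result you single out the in-phase (principal-branch) configuration with all $\varphi_n \in \left(-\frac{\pi}{2},\frac{\pi}{2}\right)$, whereas the unique stable mode is the all out-of-phase configuration with $2\varphi_1$ and each $\varphi_n$ in $\left(\frac{\pi}{2},\frac{3\pi}{2}\right)$, consistent with the stability theorem for the conservative dNLS equation in \cite{pkf}. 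The count ``exactly one'' happens to survive, but the argument selects the wrong mode. Second, the structure you ascribe to the reduced matrix is not right. The correct reduction is the generalized symmetric problem $\mu^2 B(\chi) {\bf c} = A(\theta) {\bf c}$ of (\ref{matrix-eig-problem}), where $A$ is the weighted graph Laplacian on the path of $2N$ nodes with quadratic form $2\sum_{\rm edges} \cos(\theta_{n+1}-\theta_n)(c_n - c_{n+1})^2$; it has $2N-1$ nonzero eigenvalues (each non-central phase difference enters twice by the $\p\T$ symmetry), it is not triangular, and its eigenvalues are not the diagonal cosines --- only its inertia is controlled by their signs, which is why the paper invokes Sylvester's law together with positive definiteness of $B$ for $\chi_n < \frac{1}{2}$. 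Your ``diagonal times upper-triangular Jacobian'' identification, with only $N$ nontrivial eigenvalues proportional to $(1-2\chi_n)\cos\varphi_n$, does not match this and is what leads you astray in the degenerate case.

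Third, and most seriously, the mechanism you propose for $\chi_n = \frac{1}{2}$ would fail. The leading-order reduced matrix does not vanish there; rather, the kernel of the $\delta = 0$ problem enlarges from geometric multiplicity $2N$ to $3N$, with the extra eigenvectors (\ref{eigenvectors-new}) generated by an exact symmetry of the lattice that persists for all $\delta$ and can be factored out. The reduction at the \emph{same} order $\delta^{1/2}$ then produces an $N$-dimensional problem (\ref{first-order-final-new}) with Neumann-type boundary conditions $c_0 = c_1$ and $c_{N+1} = c_N$, in which the central phase difference $\theta_1 - \theta_0 = 2\varphi_1$ drops out entirely. That is precisely why exactly two modes are stable: both binary choices of $2\varphi_1$ are admissible once $\varphi_2,\ldots,\varphi_N$ are all out-of-phase. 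A second-order-in-$\delta$ expansion combined with an unspecified count of sign combinations does not capture this, and nothing in your sketch explains why the multiplicity of stable modes should be exactly two rather than some other power of two.
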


\begin{proof}
Using the rescaling
$E = 1/\delta$, ${\bf w} = {\bf W}/\delta^{1/2}$, and $\lambda = \Lambda/\delta$
with small positive $\delta$, we rewrite the
spectral problem (\ref{spectral-problem}) in the form
\begin{equation}
\label{spectral-problem-E}
\left\{ \begin{array}{l}
(1 - (2-\chi_n)|W_n|^2) \phi_n - (1 - \chi_n) W_n^2 \psi_n -
\chi_n W_n^2 \phi_{1-n} - \chi_n |W_n|^2 \psi_{1-n} \\
\phantom{texttext} = i \Lambda \phi_n + \delta \left( \phi_{n+1} + \phi_{n-1} + i \gamma_n \phi_n \right), \\
-(1-\chi_n) \bar{W}_n^2 \phi_n + (1 - (2-\chi_n)|W_n|^2) \psi_n  - \chi_n |W_n|^2 \phi_{1-n} - \chi_n \bar{W}_n^2 \psi_{1-n}  \\
\phantom{texttext} = -i \Lambda \psi_n + \delta \left( \psi_{n+1} + \psi_{n-1} - i \gamma_n \psi_n \right).
\end{array} \right.
\end{equation}
Recall from the construction of Theorem \ref{theorem-general} that
$W_n = e^{i \theta_n} \left(1 + \mathcal{O}(\delta) \right)$
for $1 \leq n \leq N$ as $\delta \to 0$, where $\theta_1 = \varphi_1$ and $\theta_{n} - \theta_{n-1} = \varphi_n$ for $2 \leq n \leq N$.
We also recognize from the system (\ref{parameter-4a}) for the phase variables that
$2 \varphi_1$ and each $\varphi_n$ for $2 \leq n \leq N$ is taken either in
the interval $\left(-\frac{\pi}{2},\frac{\pi}{2}\right)$ or in the interval $\left(\frac{\pi}{2},\frac{3\pi}{2}\right)$ by the binary
solution of the equations for the sine-functions provided the condition (\ref{condition})
is satisfied. We say that
the adjacent $n^{\rm th}$ and "$(n-1)$th and $n$th nodes of the dNLS lattice are {\em in-phase}
if $\varphi_n \in \left(-\frac{\pi}{2},\frac{\pi}{2}\right)$
or {\em out-of-phase} if $\varphi_n \in \left(\frac{\pi}{2},\frac{3\pi}{2}\right)$.
Taking into account
that $W_{-n+1} = \bar{W}_n$, we say that
the $0^{\rm th}$ and $1^{\rm st}$ nodes of the dNLS lattice are {\em in-phase}
if $2 \varphi_1 \in \left(-\frac{\pi}{2},\frac{\pi}{2}\right)$
or {\em out-of-phase} if $2 \varphi_1 \in \left(\frac{\pi}{2},\frac{3\pi}{2}\right)$.

\vspace{0.2cm}

{\bf Case $\chi_n < \frac{1}{2}$ for all $n$:} We will show that the only spectrally
stable configuration of the stationary nonlinear modes is the one with all
adjacent nodes being out-of-phase. This conclusion fully agrees with the stability
theorem for discrete solitons in the focusing dNLS equation \cite{pkf}. In fact,
we intend to show that the proof of Theorem \ref{theorem-stability-1} reduces
to the computations of the first-order perturbation theory from the previous
analysis in \cite{pkf}. Justification of the first-order perturbation theory can be
found in Chapter 4.3 of the monograph \cite{Pel-book}.

For $\delta = 0$, there exists only one eigenvalue $\Lambda = 0$ in the
spectral problem (\ref{spectral-problem-E}). If all $\chi_n \neq \frac{1}{2}$,
then the zero eigenvalue has geometric multiplicity $2N$
with the kernel spanned by the eigenvectors
\begin{equation}
\label{eigenvectors}
\left[ \begin{array}{l} \phi_n \\ \psi_n \end{array} \right] = i \left[ \begin{array}{l}
e^{i \theta_n} \\ - e^{-i \theta_n} \end{array} \right], \quad -N+1 \leq n \leq N,
\end{equation}
where $\theta_{-n+1} = -\theta_n$ for $1 \leq n \leq N$, thanks to the $\p\T$ symmetry of
the stationary mode. For each eigenvector in (\ref{eigenvectors}), there exists a generalized
eigenvector
\begin{equation}
\label{eigenvectors-generalized}
\left[ \begin{array}{l} \phi_n \\ \psi_n \\ \phi_{1-n} \\ \psi_{1-n} \end{array} \right] =
\frac{1}{2 (1 - 2 \chi_n)} \left[ \begin{array}{l}
(1-\chi_n) e^{i \theta_n} \\ (1 - \chi_n) e^{-i \theta_n} \\
-\chi_n e^{-i \theta_n} \\ -\chi_n e^{i \theta_n} \end{array} \right], \quad -N+1 \leq n \leq N,
\end{equation}
which satisfies a derivative of the spectral problem (\ref{spectral-problem-E}) in $\Lambda$.

For the first-order perturbation theory, we construct the perturbation expansion
$\Lambda = \mu \delta^{1/2} + \mathcal{O}(\delta^{3/2})$ and a linear superposition
of the eigenvectors (\ref{eigenvectors}) and the generalized eigenvectors (\ref{eigenvectors-generalized}),
\begin{eqnarray}
\nonumber
\left[ \begin{array}{l} \phi_n \\ \psi_n \\ \phi_{1-n} \\ \psi_{1-n} \end{array} \right] & = &
i c_n \left[ \begin{array}{l} e^{i \theta_n} \\ - e^{-i \theta_n} \\ 0 \\ 0 \end{array} \right] +
i c_{1-n} \left[ \begin{array}{l} 0 \\ 0 \\ e^{-i \theta_n} \\ - e^{i \theta_n} \end{array} \right]
+ \frac{\mu \delta^{1/2} c_n}{2 (1 - 2 \chi_n)} \left[ \begin{array}{l}
(1-\chi_n) e^{i \theta_n} \\ (1 - \chi_n) e^{-i \theta_n} \\
-\chi_n e^{- i \theta_n} \\ -\chi_n e^{i \theta_n} \end{array} \right]
\\
& \phantom{t} &
+ \frac{\mu \delta^{1/2} c_{1-n}}{2 (1 - 2 \chi_n)}  \left[ \begin{array}{l}
-\chi_n e^{i \theta_n} \\ -\chi_n e^{- i \theta_n} \\
(1-\chi_n) e^{-i \theta_n} \\ (1-\chi_n) e^{i \theta_n} \end{array} \right]
+\delta \left[ \begin{array}{l} \phi_n^{(1)} \\ \psi_n^{(1)} \\ \phi_{1-n}^{(1)} \\ \psi_{1-n}^{(1)} \end{array} \right]
+ \mathcal{O}(\delta^{3/2}),\label{perturbation}
\end{eqnarray}
where $\mu$ is a new eigenvalue and $\{ c_n \}_{-N+1}^N$ are components of the eigenvector.

To derive equations at the first order that uniquely determine all $4N$ small eigenvalues $\Lambda$,
we represent $W_n = e^{i \theta_n} a_n^{1/2}$, where $a_n$ is expanded from the
system of amplitude equations (\ref{parameter-3}) as $a_n = 1 - \delta a_n^{(1)} + \mathcal{O}(\delta^2)$,
where
\begin{equation*}
\left\{ \begin{array}{l} a_1^{(1)} = \cos(\varphi_2) + \cos(2 \varphi_1), \\
a_n^{(1)} = \cos(\varphi_{n+1}) + \cos(\varphi_n), \quad 2 \leq n \leq N-1, \\
a_N^{(1)} = \cos(\varphi_{N}).
\end{array} \right.
\end{equation*}
Note that due to the definition of $\theta_n$, these equations can be written in the form
\begin{equation}
\label{amplitude-expansion}
a_n^{(1)} = \cos(\theta_{n+1}-\theta_n) + \cos(\theta_n - \theta_{n-1}), \quad 1 \leq n \leq N,
\end{equation}
subject to the boundary conditions $\theta_0 = -\theta_1$ and $\theta_{N+1} = \theta_N + \frac{\pi}{2}$.
This representation is independent of the choice for the coefficients $\gamma_n$ and $\chi_n$.
Moreover, we can extend this representation to $-N+1 \leq n \leq 0$ by $a_{1-n}^{(1)} = a_n^{(1)}$
with the natural definition $\theta_{-N} = \theta_{-N+1} - \frac{\pi}{2}$.

Substituting Eq. (\ref{perturbation}) to the spectral problem
(\ref{spectral-problem-E}), we obtain the system of difference equations:
\begin{eqnarray*}
& \phantom{t} &    (\chi_n - 1) \left( e^{-i \theta_n} \phi_n^{(1)} + e^{i \theta_n} \psi_n^{(1)} \right)
- \chi_n \left( e^{-i \theta_n} \psi_{1-n}^{(1)} + e^{i \theta_n} \phi_{1-n}^{(1)} \right) =   - \gamma_n c_n - i a_n^{(1)} c_n
\\& \phantom{t} & %
\phantom{text}  +\frac{i\mu^2}{2 (1-2\chi_n)}  \left[ (1-\chi_n) c_n - \chi_n c_{1-n} \right]
 +i \left( c_{n+1} e^{i (\theta_{n+1}-\theta_n)} + c_{n-1} e^{i (\theta_{n-1}-\theta_n)} \right),\\[3mm]%
& \phantom{t} & (\chi_n - 1) \left( e^{-i \theta_n} \phi_n^{(1)} + e^{i \theta_n} \psi_n^{(1)} \right)
- \chi_n \left( e^{i \theta_n} \phi_{1-n}^{(1)} + e^{-i \theta_n} \psi_{1-n}^{(1)} \right) = - \gamma_n c_n + i a_n^{(1)} c_n
\\& \phantom{t} & %
\phantom{text}
-\frac{i\mu^2 }{2 (1-2\chi_n)} \left[ (1-\chi_n) c_n - \chi_n c_{1-n} \right]
-i \left( c_{n+1} e^{-i (\theta_{n+1}-\theta_n)} + c_{n-1} e^{-i (\theta_{n-1}-\theta_n)} \right),
\end{eqnarray*}
Note that derivation of these equations is independent from the fact that the phase variable $\theta_n$
depends on $\delta$ as the same phase variables are included in the first two terms of the
decomposition (\ref{perturbation}). Eliminating the first-order correction terms and
using (\ref{amplitude-expansion}) for $a_n^{(1)}$, we obtain the reduced eigenvalue problem
\begin{eqnarray}
\label{first-order-final}
 &\phantom{t}&\mu^2 (1-2\chi_n)^{-1} \left[ (1-\chi_n) c_n - \chi_n c_{1-n} \right] \nonumber  \\
&\phantom{t}& \phantom{text}  =2 \cos(\theta_{n+1}-\theta_n) (c_n - c_{n+1}) + 2 \cos(\theta_{n-1}-\theta_n) (c_n -c_{n-1}),
\end{eqnarray}
where $-N+1 \leq n \leq N$. Recall that the linear system (\ref{first-order-final}) is closed at $2N$ equations
since $\theta_{N+1} = \theta_N + \frac{\pi}{2}$ and $\theta_{-N} = \theta_{-N+1} - \frac{\pi}{2}$.
We can rewrite the reduced eigenvalue problem in the matrix form,
\begin{equation}
\label{matrix-eig-problem}
\mu^2 B(\chi_1,\chi_2,\ldots,\chi_N) {\bf c} = A(\theta_1,\theta_2,\ldots,\theta_N) {\bf c},
\end{equation}
where $A$ and $B$ are $2N \times 2N$ matrices.

The symmetric matrix $A(\theta_1,\theta_2,\ldots,\theta_N)$ coincides with the one considered in \cite{pkf}
for a finite dNLS chain of $2N$ nodes. Although the phase variables $(\theta_1,\theta_2,\ldots,\theta_N)$
depend on $\gamma$, the sign of eigenvalues of $A(\theta_1,\theta_2,\ldots,\theta_N)$ does not depend on $\gamma$,
because of the binary choice for the roots of the sine-function and our definition of
the in-phase and out-of-phase stationary solutions. In particular, all but one eigenvalue
of $A(\theta_1,\theta_2,\ldots,\theta_N)$ are strictly negative for the out-of-phase configuration
with $2 \varphi_1$ and each $\varphi_n$ for $2 \leq n \leq N$ taken in
the interval $\left(\frac{\pi}{2},\frac{3\pi}{2}\right)$. (The last eigenvalue of $A(\theta_1,\theta_2,\ldots,\theta_N)$
is always zero.) More generally, the number of positive eigenvalues of
$A(\theta_1,\theta_2,\ldots,\theta_N)$ coincides with the number of in-phase
differences in the sequence $\theta_{n+1}-\theta_n$ for $-N+2 \leq n \leq N-1$.

Now if $\chi_n < \frac{1}{2}$ for all $n$, the symmetric matrix $B(\chi_1,\chi_2,\ldots,\chi_N)$ is strictly positive definite
because each Gershgorin's circle is bounded from zero in the positive domain. By Sylvester's inertial law theorem
\cite{Lancaster}, the numbers of positive, negative, and zero eigenvalues of $\mu^2$ in the reduced eigenvalue
problem (\ref{matrix-eig-problem}) coincides with those of the matrix $A(\theta_1,\theta_2,\ldots,\theta_N)$.
Therefore, the only spectrally stable stationary solution must have the out-of-phase configuration
for all phase differences in the sequence $\theta_{n+1}-\theta_n$ for $-N+2 \leq n \leq N-1$.

\vspace{0.2cm}

{\bf Case $\chi_n = \frac{1}{2}$ for all $n$:} For $\delta = 0$, the zero eigenvalue
$\Lambda = 0$ of the spectral problem (\ref{spectral-problem-E}) has geometric multiplicity $3N$
with the kernel spanned by the eigenvectors
\begin{equation}
\label{eigenvectors-new}
\left[ \begin{array}{l} \phi_n \\ \psi_n \\ \phi_{1-n} \\ \psi_{1-n} \end{array} \right] = i c_n \left[ \begin{array}{l}
e^{i \theta_n} \\ - e^{-i \theta_n} \\ e^{-i \theta_n} \\ - e^{i \theta_n} \end{array} \right] +
i a_n  \left[ \begin{array}{l}
e^{i \theta_n} \\ 0 \\ -e^{-i \theta_n} \\ 0 \end{array} \right] +
i b_n  \left[ \begin{array}{l}
0 \\ - e^{-i \theta_n} \\ 0 \\ e^{i \theta_n} \end{array} \right], \quad 1 \leq n \leq N,
\end{equation}
Only the first eigenvector in (\ref{eigenvectors-new}) generates a generalized
eigenvector
\begin{equation}
\label{eigenvectors-generalized-new}
\left[ \begin{array}{l} \phi_n \\ \psi_n \\ \phi_{1-n} \\ \psi_{1-n}  \end{array} \right] = \frac{1}{2} c_n \left[ \begin{array}{l}
e^{i \theta_n} \\ e^{-i \theta_n} \\ e^{-i \theta_n} \\ e^{i \theta_n}  \end{array} \right], \quad 1 \leq n \leq N,
\end{equation}
which satisfies a derivative of the spectral problem (\ref{spectral-problem-E}) in $\Lambda$,
hence $N$ generalized eigenvectors exist. Note that the other two eigenvectors in (\ref{eigenvectors-new})
are related to the symmetry of the dNLS chain (\ref{dnls-general}) when $\chi_n = \frac{1}{2}$ for all $n$
and they are preserved in the  spectral problem (\ref{spectral-problem-E}) for $\Omega = 0$ and any $\delta$.
Therefore, we set $a_n = b_n = 0$ in the perturbation expansions below.

For the first-order perturbation theory, we construct the perturbation expansion
$\Lambda = \mu \delta^{1/2} + \mathcal{O}(\delta^{3/2})$ and a linear superposition
of the first eigenvectors in (\ref{eigenvectors-new}) and the generalized eigenvectors (\ref{eigenvectors-generalized-new}).
Proceeding similarly as in the previous case, we obtain the system of difference equations:
\begin{eqnarray*}
& \phantom{t} & -\frac{1}{2} \left( e^{-i \theta_n} \phi_n^{(1)} + e^{i \theta_n} \psi_n^{(1)} +
e^{-i \theta_n} \psi_{1-n}^{(1)} + e^{i \theta_n} \phi_{1-n}^{(1)} \right) =
\frac{i}{2} \mu^2 c_n \\
& \phantom{t} & \phantom{texttexttext} - \gamma_n c_n - i a_n^{(1)} c_n +
i \left( c_{n+1} e^{i (\theta_{n+1}-\theta_n)} + c_{n-1} e^{i (\theta_{n-1}-\theta_n)} \right), \\
& \phantom{t} & -\frac{1}{2} \left( e^{-i \theta_n} \phi_n^{(1)} + e^{i \theta_n} \psi_n^{(1)} +
e^{i \theta_n} \phi_{1-n}^{(1)} + e^{-i \theta_n} \psi_{1-n}^{(1)} \right) =
-\frac{i}{2} \mu^2 c_n \\
& \phantom{t} & \phantom{texttexttext} - \gamma_n c_n + i a_n^{(1)} c_n -
i \left( c_{n+1} e^{-i (\theta_{n+1}-\theta_n)} + c_{n-1} e^{-i (\theta_{n-1}-\theta_n)} \right),
\end{eqnarray*}
which yields the reduced eigenvalue problem
\begin{equation}
\label{first-order-final-new}
\mu^2 c_n = 2 \cos(\theta_{n+1}-\theta_n) (c_n - c_{n+1}) + 2 \cos(\theta_{n-1}-\theta_n) (c_n -c_{n-1}).
\end{equation}
This eigenvalue problem is closed for $1 \leq n \leq N$ subject to the boundary conditions $c_0 = c_1$ and $c_{N+1} = c_N$.
It corresponds to the system of linear equations (\ref{first-order-final}) with $\chi_n = 0$ but
the number of equations is $N$ instead of $2N$. In other words, we obtained the same reduced eigenvalue problem as in \cite{pkf} but
for a finite dNLS chain of $N$ nodes. The phase difference $\theta_1 - \theta_0 = 2 \varphi_1$
becomes irrelevant for stability computations and hence two spectrally stable stationary solutions
exist and have the out-of-phase phase differences in the sequence $\theta_{n+1}-\theta_n$ for $1 \leq n \leq N-1$.
For these two stable stationary solutions, the phase difference $\theta_1 - \theta_0 = 2 \varphi_1$ can be
either in phase or out of phase.
\end{proof}

\begin{theorem}
\label{theorem-stability-2}
Consider the two stationary solutions of Theorem \ref{theorem-nonlocal-bifurcation}  for $M = 1$
under the condition $|\gamma_1| < 1$. If $\chi_1 < \frac{1}{2}$, the in-phase stationary mode is spectrally unstable,
whereas the out-of-phase stationary mode is spectrally stable if and only if the zero equilibrium
for the two disjoint isolated sets $-N+1 \leq n \leq -1$ and $2 \leq n \leq N$ in the dNLS
equation (\ref{dnls-general}) is spectrally stable.
If $\chi_1 = \frac{1}{2}$, both stationary modes are
spectrally stable if and only if the zero equilibrium
for the two disjoint isolated sets $-N+1 \leq n \leq -1$ and $2 \leq n \leq N$ in the dNLS
equation (\ref{dnls-general}) is spectrally stable.
\end{theorem}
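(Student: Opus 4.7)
The plan is to parallel the strategy of Theorem \ref{theorem-stability-1}: apply the rescaling $E = 1/\delta$, ${\bf w} = {\bf W}/\delta^{1/2}$, $\lambda = \Lambda/\delta$ and analyze the spectral problem (\ref{spectral-problem-E}) in the limit $\delta \to 0$. What is new here is that, for $M = 1$, only the central pair $W_0 = \bar W_1 = e^{i\varphi_1} + \mathcal{O}(\delta)$ carries $\mathcal{O}(1)$ amplitude, whereas $W_n = \mathcal{O}(\delta)$ on each of the ``outer'' sites $n \in \{-N+1, \ldots, -1\} \cup \{2, \ldots, N\}$. Consequently the $4N$ eigenvalues split into two well-separated groups, which I would treat separately.

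First I would analyze the \emph{small} eigenvalues $\Lambda = \mathcal{O}(\delta^{1/2})$, which bifurcate from the degenerate zero eigenvalue of the central dimer. The first-order perturbation expansion that produced (\ref{first-order-final}) in the proof of Theorem \ref{theorem-stability-1} goes through essentially verbatim on the central subspace $(\phi_0, \phi_1, \psi_0, \psi_1)$, because the outer components of the bifurcating eigenvector are suppressed by a factor of $\delta$ (the outer sites couple to the central subspace only through the $\mathcal{O}(\delta)$ hopping terms in (\ref{spectral-problem-E})). The reduced eigenvalue problem is then exactly the one produced by the isolated dimer, i.e.\ the $N = 1$ instance of Theorem \ref{theorem-stability-1}. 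For $\chi_1 < \frac{1}{2}$ the sign analysis of the reduced matrix in (\ref{matrix-eig-problem}) gives $\mu^2 > 0$ in the in-phase case (so $\Lambda$ is real and the mode is already unstable at this scale, regardless of the outer block) and $\mu^2 \le 0$ in the out-of-phase case (so $\Lambda$ is purely imaginary and stability is preserved at this scale). For $\chi_1 = \frac{1}{2}$ the extra null vectors in (\ref{eigenvectors-new}) render $2\varphi_1$ irrelevant for stability, and the reduced problem (\ref{first-order-final-new}) on a single node yields $\mu^2 = 0$, so both modes are stable at this scale; this is also consistent with the nonlinear boundedness assertion of Theorem \ref{theorem-global}.

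Next I would analyze the \emph{large} eigenvalues $\Lambda = \pm i + \delta\,\Omega + \mathcal{O}(\delta^2)$, sitting near $\Lambda = \pm i$ (equivalently $\lambda = \pm i E + \Omega + \mathcal{O}(1/E)$ in the original variable). On an outer site the nonlinear coefficients in (\ref{spectral-problem-E}) are of order $\delta^2$, so a nontrivial outer component forces the leading coefficient $1 \mp i\Lambda$ to vanish to the same order, which singles out the two branches above. Taking the branch $\Lambda = -i + \delta\Omega$ and substituting, the $\phi$-sector at outer $n$ reduces to
\begin{equation*}
-i\Omega\,\phi_n \;=\; \phi_{n+1} + \phi_{n-1} + i\gamma_n\,\phi_n, \qquad n \in \{-N+1,\ldots,-1\} \cup \{2,\ldots,N\},
\end{equation*}
with Dirichlet conditions at the outer ends and $\phi_0 = \phi_1 = 0$ at the inner edges. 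The inner Dirichlet conditions are legitimate because, at $\Lambda = -i$, the central $4 \times 4$ block of (\ref{spectral-problem-E}) is non-singular (its entire $\delta = 0$ spectrum sits at $\Lambda = 0$), so $\phi_0, \phi_1$ are forced to be $\mathcal{O}(\delta)$ and drop out at leading order. The resulting operator is exactly $H + iG$ restricted to the two disjoint isolated sets; its eigenvalues $\tilde E_k$ determine $\Omega = i\tilde E_k$, and the $\psi$-sector (coming from $\Lambda = i + \delta\Omega$) yields the complex-conjugate spectrum. Hence every such $\Lambda$ satisfies ${\rm Re}(\Lambda) = 0$ if and only if every $\tilde E_k$ is real, which is precisely the condition that the zero equilibrium of the dNLS equation on the two disjoint isolated sets is spectrally stable.

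Combining the two groups gives the equivalence stated in the theorem. The hard part will be rigorously justifying the separation of scales: (i) the small central eigenvalues must remain uniformly bounded away from $\pm i$ as $\delta \to 0$, and (ii) the central $4 \times 4$ block evaluated at $\Lambda = \pm i$ must stay invertible uniformly in parameters obeying $|\gamma_1| < 1$, so that the outer eigenvectors indeed satisfy $\phi_0 = \phi_1 = \mathcal{O}(\delta)$. Both facts follow from direct inspection of the central block together with the Rellich--Lidskii perturbation expansion for simple and semisimple eigenvalues under analytic perturbations, as already invoked in the proof of Theorem \ref{theorem-stability-1} via Chapter~4.3 of \cite{Pel-book}.
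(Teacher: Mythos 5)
Your proposal follows essentially the same route as the paper's proof: the $4N$ eigenvalues are split into the small ones bifurcating from $\Lambda=0$, handled by the $N=1$ reduction of the matrix problem (\ref{first-order-final}) giving $\mu^2 = 4(1-2\chi_1)\cos(2\varphi_1)$, and the ones near the semi-simple pair $\Lambda=\pm i$, whose first-order perturbation yields exactly the linear eigenvalue problem (\ref{reduced-eigenvalue}) on the two disjoint outer sets with Dirichlet conditions. The only cosmetic difference is that for $\chi_1=\tfrac12$ the paper argues the zero eigenvalue of multiplicity four persists exactly for all small $\delta$ (via the extra eigenvectors (\ref{eigenvectors-new})) rather than only concluding $\mu^2=0$ at first order, which is the slightly stronger statement you would want to close that case.
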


\begin{proof}
We still consider the spectral problem (\ref{spectral-problem-E}) but the stationary solution
is now expanded as
\begin{equation}
\label{leading-order-W}
W_n = e^{i \varphi_1} \delta_{n,1} + e^{-i \varphi_1} \delta_{n,0} + \mathcal{O}(\delta),
\end{equation}
where $\delta_{n,j}$ is the Kronecker symbol and $\sin(2 \varphi_1) = \gamma_1$. Because of this expansion,
the spectral problem (\ref{spectral-problem-E}) for $\delta = 0$ has
eigenvalue $\Lambda = 0$ of algebraic multiplicity
$4$ and the pair of semi-simple eigenvalues $\Lambda = \pm i$ of multiplicity
$2N-2$. The double zero eigenvalue splits according to the first-order perturbation theory
described in the proof of Theorem \ref{theorem-stability-1}.

If $\chi_1 \neq \frac{1}{2}$, then only two eigenvectors (\ref{eigenvectors}) exist for $n = 0$ and $n = 1$
and the matrix eigenvalue problem (\ref{first-order-final}) takes the form
\begin{equation}
\label{Mequal1}
\left\{ \begin{array}{l} \mu^2 (1-2\chi_1)^{-1} \left[ (1-\chi_1) c_1 - \chi_1 c_0 \right] = 2 \cos(2 \varphi_1) (c_1 - c_0), \\
\mu^2 (1-2\chi_1)^{-1} \left[ (1-\chi_1) c_0 - \chi_1 c_1 \right] = 2 \cos(2 \varphi_1) (c_0 - c_1). \end{array} \right.
\end{equation}
Therefore, a pair of simple nonzero eigenvalues exists at $\mu^2 = 4 (1 - 2\chi_1) \cos(2 \varphi_1)$.
If $\chi_1 < \frac{1}{2}$, the in-phase
stationary mode with $2 \varphi_1 \in \left(-\frac{\pi}{2},\frac{\pi}{2}\right)$
has a real eigenvalue and is hence unstable right away.
The out-of-phase stationary mode with $2 \varphi_1 \in \left(\frac{\pi}{2},\frac{3 \pi}{2}\right)$
has a pair of imaginary eigenvalues.

If $\chi_1 = \frac{1}{2}$, the zero eigenvalue $\Lambda = 0$ of multiplicity $4$ persists
in the spectral problem (\ref{spectral-problem-E}) for any small $\delta$ with three eigenvectors
(\ref{eigenvectors-new}) and one generalized eigenvector (\ref{eigenvectors-generalized-new}). Therefore,
the zero eigenvalue does not split for $\delta \neq 0$, both for in-phase and out-of-phase stationary modes.

To clarify stability of the stationary modes with respect to the semi-simple eigenvalues $\Lambda = \pm i$,
we again consider the first-order perturbation
theory. For definiteness, we will consider the point $\Lambda = -i$.
Recall (\ref{leading-order-W}), denote
$\Lambda = -i + \mu \delta + \mathcal{O}(\delta^2)$, and use the expansion
\begin{equation}
\label{perturbation-2}
\left[ \begin{array}{l} \phi_n \\ \psi_n \end{array} \right] = c_n \left[ \begin{array}{l}
1 \\ 0 \end{array} \right] + \delta \left[ \begin{array}{l} \phi_n^{(1)} \\ \psi_n^{(1)}  \end{array} \right]
+ \mathcal{O}(\delta^2),
\end{equation}
for $2 \leq n \leq N$ and $-N+1 \leq n \leq -1$, where $\mu$ is a new eigenvalue.
Substituting (\ref{perturbation-2}) to the spectral problem
(\ref{spectral-problem-E}) for these $n$, we obtain the difference equations at the $\mathcal{O}(\delta)$ order:
\begin{eqnarray*}
0 & = & i(\mu + \gamma_n) c_n + c_{n+1} + c_{n-1}, \\
2 \phi_n^{(1)} & = & 0,
\end{eqnarray*}
where the boundary conditions are $c_1 = c_{N+1} = 0$ and $c_{-N} = c_0 = 0$. Because of these
boundary conditions, we actually have two uncoupled eigenvalue problems for
two disjoint sets
$$
S_- = \{ n : \;\; -N+1 \leq n \leq -1 \} \quad
\mbox{\rm and} \quad
S_+ = \{ n : \;\; 2 \leq n \leq N \}.
$$
For $\mu = i \tilde{E}$, we arrive to the same linear eigenvalue problem
\begin{equation}
\label{reduced-eigenvalue}
\tilde{E} c_n = c_{n+1} + c_{n-1} + i \gamma_n c_n, \quad n \in S_{\pm},
\end{equation}
as the one that occurs in the linear dNLS equation for
$S_+$ and $S_-$ subject to the Dirichlet boundary conditions
at the end points. This yields the assertion of the theorem.
\end{proof}

Note that the stability conclusions of Theorem \ref{theorem-stability-1}
change if $\chi_n > \frac{1}{2}$ for at least some $n$.
In this case, the matrix $B(\chi_1,\chi_2,\ldots,\chi_N)$ in the reduced eigenvalue problem (\ref{matrix-eig-problem})
is not positive definite and the signs of eigenvalues $\mu^2$
do not coincide any longer with the signs of eigenvalues of the matrix $A(\theta_1,\theta_2,\ldots,\theta_N)$.
Nevertheless, it is easy to adjust the stability conclusions of Theorem \ref{theorem-stability-2}
for the stationary mode with $M = 1$. It follows from the eigenvalue $\mu^2 = 4 (1 - 2\chi_1) \cos(2 \varphi_1)$
of the reduced eigenvalue problem (\ref{Mequal1}) for $\chi_1 > \frac{1}{2}$ that
the out-phase stationary mode with $2 \varphi_1 \in \left(\frac{\pi}{2},\frac{3 \pi}{2}\right)$
is unstable, whereas the in-phase stationary mode with $2 \varphi_1 \in \left(-\frac{\pi}{2},\frac{\pi}{2}\right)$
is spectrally stable if and only if the zero equilibrium
for the two disjoint isolated sets $-N+1 \leq n \leq -1$ and $2 \leq n \leq N$ in the dNLS
equation (\ref{dnls-general}) is spectrally stable.

For the clustered chain, we have $\gamma_n = \pm \gamma$ for $n \in S_{\pm}$ and
the linear eigenvalue problems (\ref{reduced-eigenvalue}) have $N-1$ pairs of complex
eigenvalues $\tilde{E}$ with ${\rm Im}(\tilde{E}) = \pm \gamma$.
Hence the two stationary modes with $M = 1$ are spectrally unstable in the clustered chain with
any $N \geq 2$.

For the alternating chain, we have $\gamma_n = (-1)^n \gamma$ and the linear eigenvalue problems
(\ref{reduced-eigenvalue}) have $N-1$ pairs of real eigenvalues $\tilde{E}$ if $N$ is odd
and $|\gamma|  < \gamma_{\p\T}$ because the numbers of oscillators with gain and dissipation
are equal to each other. Hence the out-of-phase stationary mode is spectrally stable in the alternating chain with
odd $N$ if $\chi_n < \frac{1}{2}$ for all $n$ and $|\gamma|<\gamma_{\PT}$.
If $N$ is even, however, there exists always one pair of purely imaginary eigenvalues with
${\rm Im}(\tilde{E}) = \pm \gamma$ because the number of oscillators with gain and dissipation
do not match each other by one. Therefore, the two stationary modes with $M = 1$ are spectrally unstable
in the alternating chain with even $N$.

For the defect chain, the defects have to be located at the central sites $n = 0$ and $n = 1$ by
the construction of the $\p\T$-invariant stationary solutions. In this case, all $\gamma_n = 0$,
hence all eigenvalues $\tilde{E}$ of the linear eigenvalue problem (\ref{reduced-eigenvalue}) are
real. Therefore, the out-of-phase stationary mode is spectrally stable in the defect chain if $\chi_n < \frac{1}{2}$ for all $n$.

\begin{figure}
\centering\includegraphics[width=\textwidth]{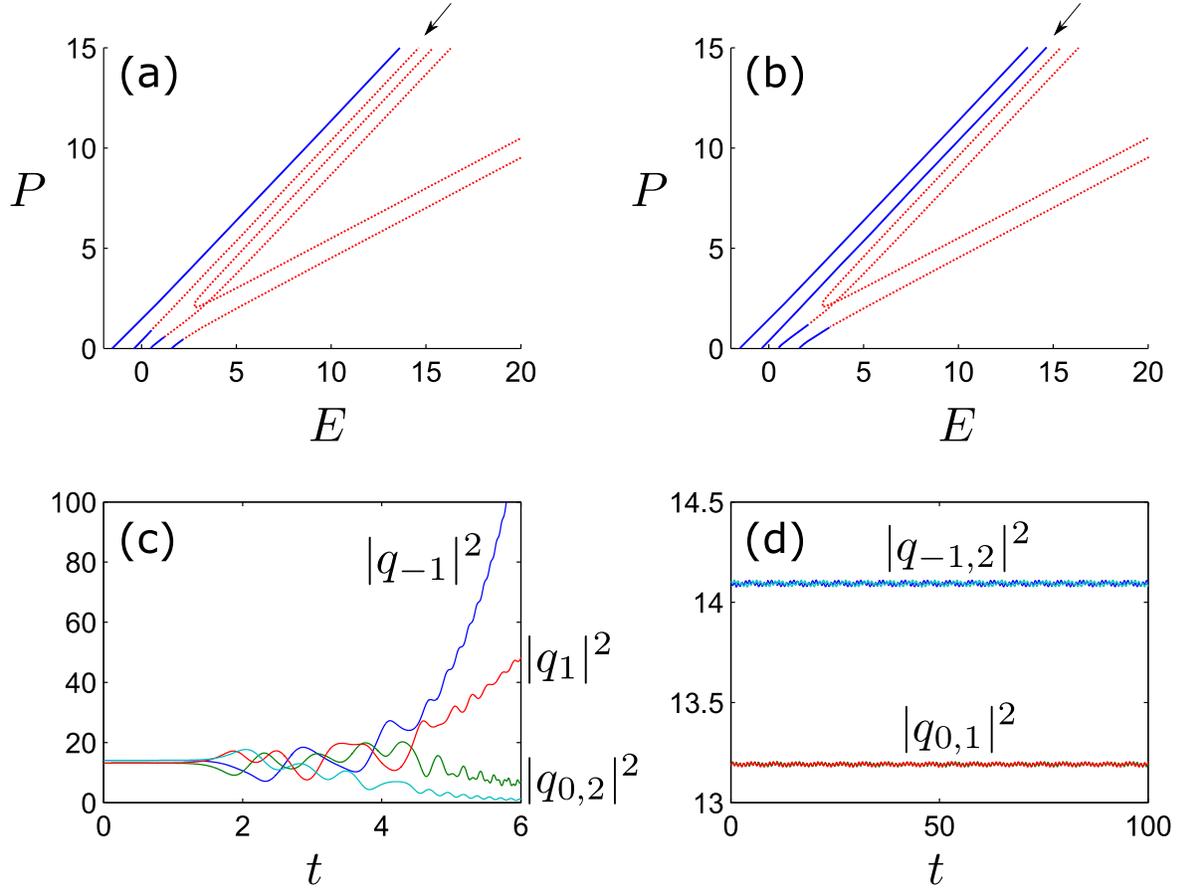}%
\caption{(Color online) Top panels show nonlinear modes of the quadrimer visualized on the plane $P$ vs. $E$.
Stable (unstable) modes are shown with  blue solid  (red dashed) fragments of the curves.
Arrows show the family that is unstable for large $E$ in the left panel  (which corresponds
to $\chi_{1}=\chi_2=0$), but becomes stable for large $E$ in the right panel (which corresponds
to $\chi_{1}=\chi_2=\frac{1}{2}$). Dynamics of a particular stationary mode  belonging to
this family (at $E=13$) is shown in the bottom panels (c) and (d), which also
correspond to $\chi_{1}=\chi_2=0$ and  $\chi_{1}=\chi_2=\frac{1}{2}$,
respectively. Dynamics of the chosen stationary mode is unstable in the
first case and is stable in the second case.  A small  perturbation was added
to the initial data. For all panels $\gamma_1=\frac{1}{2}$ and $\gamma_2 = -\frac{1}{4}$. }
\label{fig}
\end{figure}

\subsection{Numerical illustration}

Let us illustrate the analytical  predictions of Theorems \ref{theorem-general},
\ref{theorem-nonlocal-bifurcation}, \ref{theorem-stability-1}, and \ref{theorem-stability-2} using as a particular example
the quadrimer model ($N=2$). We have numerically identified its stationary modes
of the stationary dNLS equation (\ref{stationary-general}) with $\gamma_1 = \frac{1}{2}$
and $\gamma_2 = -\frac{1}{4}$ for
different $E$ and computed the stability of the stationary modes by means of the direct computation
of the spectrum of the linearized problem (\ref{spectral-problem}).  The results are
summarized in Fig.~\ref{fig}.

In Fig.~\ref{fig}~(a)--(b) for $\chi_{1,2} = 0$ and $\chi_{1,2} = \frac{1}{2}$ respectively,
the computed nonlinear modes are visualized  in the plane $P$ vs. $E$, where the quantity $P$ is defined as
\begin{equation}
P = \frac{1}{2N} \sum_{n=-N+1}^N|w_n|^2,
\end{equation}
and hence is associated with the norm of a stationary solution.
In agreement with Proposition~\ref{theorem-exactly},  the numerical  approach has indicated
exactly six nonlinear modes existing for sufficiently large $E$. These modes
correspond to the six curves going to $E\to \infty$ on Fig.~\ref{fig}~(a)--(b).
Four solutions covered by Theorem~\ref{theorem-general}  correspond to the four upper curves,
while the other two solutions (described by Theorem~\ref{theorem-nonlocal-bifurcation}
with $N=2$ and $M=1$) correspond to  the two lower curves.

The most important difference between panels (a) and (b)  of Fig.~\ref{fig}
(i.e. between the cases with $\chi_{1,2}=0$ and  $\chi_{1,2}=\frac{1}{2}$) stems
from the stability of the modes. In the former case, there exists exactly
one stable solution for sufficiently large $E$, while in the latter case,
there are exactly two stable solutions.
The observed stability features are   in the full agreement
with Theorems~\ref{theorem-stability-1}
and  \ref{theorem-stability-2}. Indeed, Theorem~\ref{theorem-stability-1} ensures that
for large $E$  the system with   $\chi_{1,2}=0$ admits exactly one   stable solution
among the four solutions described by Theorem~\ref{theorem-general}, while the other
two solutions  (which correspond to $N=2$ and $M=1$) are unstable due to
Theorem~\ref{theorem-stability-2}. On the other hand, for    $\chi_{1,2}=\frac{1}{2}$
Theorem~\ref{theorem-stability-1} predicts exactly two stable solutions
among the four solutions described by Theorem~\ref{theorem-general}, while the
other two solutions are unstable by Theorem~\ref{theorem-stability-2}.

We note that all four stationary solutions bifurcating from the four linear modes
of the linear dNLS equation as $P \to 0$ are spectrally stable, according to the
standard local bifurcation theory. Nevertheless, all but one or two modes loss their stability for
larger values of $P$.

The observed difference in the spectral stability was also confirmed by
direct evolutional simulations of the nonlinear mode, for
which the phase differences between sites $-1$ and $0$ and
sites $1$ and $2$ are out-of-phase and the phase difference between sites $0$ and $1$
is in phase. This mode is unstable
for $\chi_{1,2}=0$ but becomes stable for $\chi_{1,2}=\frac{1}{2}$,
see    panels (c) and (d)  of Fig.~\ref{fig}, where the same initial data (subjected to a small perturbation)
display  different evolution depending on  $\chi_{1,2}$.

\section{Conclusion}
\label{sec:concl}

In the present work we analysed the existence and dynamics of solutions for the generalized $\PT$-symmetric
dNLS network consisting of a finite number of sites. The main outcomes of our work   can be summarized as follows.

First, we have revisited the linear case and established the sufficient conditions
for stability or instability of the zero equilibrium.
In particular, we have provided sufficient conditions of the unbroken and broken $\PT$ symmetry
which hold for arbitrary finite number of sites in the network.

Turning to the full nonlinear model and starting with the simplest model of a $\PT$-symmetric dimer,
we have proven the existence of unbounded solutions in a generic case. However, for a specific choice
of the nonlinearity, corresponding to an integrable  model, all solutions stay bounded, provided
the $\PT$-symmetry of the underlying linear dimer is unbroken.

Further we have shown that a finite $\PT$-symmetric network of dNLS equations
possesses stationary solutions. These solutions were analytically constructed by means of the
continuation from the anticontinuum limit (i.e. from the limit of large amplitudes or large
energy or propagation constant). A result of particular importance and novelty is
the classification of all possible stationary modes in the limit of large energies.
More specifically, we have shown that  under certain conditions a $\PT$-symmetric network
consisting of $2N$ sites admits exactly $2^{N+1}-2$ stationary modes (unique up to a gauge transform) in the large-amplitude limit.
We have also described the shape of the stationary modes and found that when approaching the
anticontinuum limit the amplitudes of  the network sites either become large or vanish.
Moreover, the  large-amplitude sites are all grouped together around the center of the network.

Finally, we have examined stability of the found stationary modes counting the number of modes that are stable in the large-amplitude limit.
The obtained analytical results have also been numerically
illustrated for the quadrimer case. The presented numerical results serve as an independent checkup
for the analytical predictions and allow us to show persistence of these stability predictions
far from the large-amplitude limit.

\section*{Acknowledgments}
The work of DEP was supported by the NSERC Discovery grant. The work of DAZ and VVK was
supported by FCT (Portugal) through the grants PTDC/FIS-OPT/1918/2012 and PEst-OE/FIS/UI0618/2011.

\end{document}